\newtheorem{definition}{Definition}
\newtheorem{proposition}[definition]{Proposition}
\newtheorem{lemma}[definition]{Lemma}
\newtheorem{theorem}[definition]{Theorem}
\newtheorem{corollary}[definition]{Corollary}
\newtheorem{conjecture}[definition]{Conjecture}
\newtheorem{remark}[definition]{Remark}
\newtheorem{example}[definition]{Example}
\newtheorem{question}[definition]{Question}
\def\squareforqed{\hbox{\rlap{$\sqcap$}$\sqcup$}}
\def\qed{\ifmmode\squareforqed\else{\unskip\nobreak\hfil
\penalty50\hskip1em\null\nobreak\hfil\squareforqed
\parfillskip=0pt\finalhyphendemerits=0\endgraf}\fi}
\def\endenv{\ifmmode\;\else{\unskip\nobreak\hfil
\penalty50\hskip1em\null\nobreak\hfil\;
\parfillskip=0pt\finalhyphendemerits=0\endgraf}\fi}
\newenvironment{proof}{\noindent \textbf{{Proof.~} }}{\qed}
\def\Dbar{\leavevmode\lower.6ex\hbox to 0pt
{\hskip-.23ex\accent"16\hss}D}
\def\url@leostyle{%
  \@ifundefined{selectfont}{\def\UrlFont{\sf}}{\def\UrlFont{\small\ttfamily}}}
\def\bcj{\begin{conjecture}}
\def\ecj{\end{conjecture}}
\def\bcr{\begin{corollary}}
\def\ecr{\end{corollary}}
\def\bd{\begin{definition}}
\def\ed{\end{definition}}
\def\bea{\begin{eqnarray}}
\def\eea{\end{eqnarray}}
\def\bem{\begin{enumerate}}
\def\eem{\end{enumerate}}
\def\bex{\begin{example}}
\def\eex{\end{example}}
\def\bim{\begin{itemize}}
\def\eim{\end{itemize}}
\def\bl{\begin{lemma}}
\def\el{\end{lemma}}
\def\bma{\begin{bmatrix}}
\def\ema{\end{bmatrix}}
\def\bpf{\begin{proof}}
\def\epf{\end{proof}}
\def\bpp{\begin{proposition}}
\def\epp{\end{proposition}}
\def\bqu{\begin{question}}
\def\equ{\end{question}}
\def\br{\begin{remark}}
\def\er{\end{remark}}
\def\bt{\begin{theorem}}
\def\et{\end{theorem}}
\def\btb{\begin{tabular}}
\def\etb{\end{tabular}}
\newcommand{\nc}{\newcommand}
\def\a{\alpha}
\def\b{\beta}
\def\g{\gamma}
\def\d{\delta}
\def\e{\epsilon}
\def\z{\zeta}
\def\m{\mu}
\def\r{\rho}
\def\s{\sigma}
\def\ps{\psi}
 \nc{\bbA}{\mathbb{A}} \nc{\bbB}{\mathbb{B}} \nc{\bbC}{\mathbb{C}}
 \nc{\bbD}{\mathbb{D}} \nc{\bbE}{\mathbb{E}} \nc{\bbF}{\mathbb{F}}
 \nc{\bbG}{\mathbb{G}} \nc{\bbH}{\mathbb{H}} \nc{\bbI}{\mathbb{I}}
 \nc{\bbJ}{\mathbb{J}} \nc{\bbK}{\mathbb{K}} \nc{\bbL}{\mathbb{L}}
 \nc{\bbM}{\mathbb{M}} \nc{\bbN}{\mathbb{N}} \nc{\bbO}{\mathbb{O}}
 \nc{\bbP}{\mathbb{P}} \nc{\bbQ}{\mathbb{Q}} \nc{\bbR}{\mathbb{R}}
 \nc{\bbS}{\mathbb{S}} \nc{\bbT}{\mathbb{T}} \nc{\bbU}{\mathbb{U}}
 \nc{\bbV}{\mathbb{V}} \nc{\bbW}{\mathbb{W}} \nc{\bbX}{\mathbb{X}}
 \nc{\bbZ}{\mathbb{Z}}
 \nc{\bA}{{\bf A}} \nc{\bB}{{\bf B}} \nc{\bC}{{\bf C}}
 \nc{\bD}{{\bf D}} \nc{\bE}{{\bf E}} \nc{\bF}{{\bf F}}
 \nc{\bG}{{\bf G}} \nc{\bH}{{\bf H}} \nc{\bI}{{\bf I}}
 \nc{\bJ}{{\bf J}} \nc{\bK}{{\bf K}} \nc{\bL}{{\bf L}}
 \nc{\bM}{{\bf M}} \nc{\bN}{{\bf N}} \nc{\bO}{{\bf O}}
 \nc{\bP}{{\bf P}} \nc{\bQ}{{\bf Q}} \nc{\bR}{{\bf R}}
 \nc{\bS}{{\bf S}} \nc{\bT}{{\bf T}} \nc{\bU}{{\bf U}}
 \nc{\bV}{{\bf V}} \nc{\bW}{{\bf W}} \nc{\bX}{{\bf X}}
 \nc{\bZ}{{\bf Z}}
\nc{\cA}{{\cal A}} \nc{\cB}{{\cal B}} \nc{\cC}{{\cal C}}
\nc{\cD}{{\cal D}} \nc{\cE}{{\cal E}} \nc{\cF}{{\cal F}}
\nc{\cG}{{\cal G}} \nc{\cH}{{\cal H}} \nc{\cI}{{\cal I}}
\nc{\cJ}{{\cal J}} \nc{\cK}{{\cal K}} \nc{\cL}{{\cal L}}
\nc{\cM}{{\cal M}} \nc{\cN}{{\cal N}} \nc{\cO}{{\cal O}}
\nc{\cP}{{\cal P}} \nc{\cQ}{{\cal Q}} \nc{\cR}{{\cal R}}
\nc{\cS}{{\cal S}} \nc{\cT}{{\cal T}} \nc{\cU}{{\cal U}}
\nc{\cV}{{\cal V}} \nc{\cW}{{\cal W}} \nc{\cX}{{\cal X}}
\nc{\cZ}{{\cal Z}}
\nc{\hA}{{\hat{A}}} \nc{\hB}{{\hat{B}}} \nc{\hC}{{\hat{C}}}
\nc{\hD}{{\hat{D}}} \nc{\hE}{{\hat{E}}} \nc{\hF}{{\hat{F}}}
\nc{\hG}{{\hat{G}}} \nc{\hH}{{\hat{H}}} \nc{\hI}{{\hat{I}}}
\nc{\hJ}{{\hat{J}}} \nc{\hK}{{\hat{K}}} \nc{\hL}{{\hat{L}}}
\nc{\hM}{{\hat{M}}} \nc{\hN}{{\hat{N}}} \nc{\hO}{{\hat{O}}}
\nc{\hP}{{\hat{P}}} \nc{\hR}{{\hat{R}}} \nc{\hS}{{\hat{S}}}
\nc{\hT}{{\hat{T}}} \nc{\hU}{{\hat{U}}} \nc{\hV}{{\hat{V}}}
\nc{\hW}{{\hat{W}}} \nc{\hX}{{\hat{X}}} \nc{\hZ}{{\hat{Z}}}
\nc{\hn}{{\hat{n}}}
\def\dim{\mathop{\rm Dim}}
\def\lin{\mathop{\rm span}}
\def\lc{\lceil}
\def\rc{\rceil}
\def\lf{\lfloor}
\def\rf{\rfloor}
\def\ox{\otimes}
\def\sm{\setminus}
\newcommand{\ket}[1]{|#1\rangle}
\newcommand{\proj}[1]{| #1\rangle\!\langle #1 |}
\newcommand{\abs}[1]{|#1|}
\newcommand{\tbc}{\red{TO BE CONTINUED...}}
\newcommand{\red}{\textcolor{red}}
\def\Dbar{\leavevmode\lower.6ex\hbox to 0pt
{\hskip-.23ex\accent"16\hss}D}
\begin{document}
\title{$4\times4$ unextendible product basis and genuinely entangled space}

\author{Kai Wang}\email[]
{kaywong@buaa.edu.cn}
\affiliation{School of Mathematics and Systems Science, Beihang University, Beijing 100191, China}

\author{Lin Chen}\email[]{linchen@buaa.edu.cn (corresponding author)}
\affiliation{School of Mathematics and Systems Science, Beihang University, Beijing 100191, China}
\affiliation{International Research Institute for Multidisciplinary Science, Beihang University, Beijing 100191, China}

%

\author{Lijun Zhao}
\affiliation{School of Mathematics and Systems Science, Beihang University, Beijing 100191, China}

\author{Yumin Guo}
\affiliation{School of Mathematical Sciences, Capital Normal University, Beijing 100048, China}

\date{\today}

\pacs{03.65.Ud, 03.67.Mn}

\begin{abstract}
We show that there are six inequivalent $4\times4$ unextendible product bases (UPBs) of size eight, when we consider only  4-qubit product vectors. We apply our results to construct Positive-Partial-Transpose entangled states of rank nine. They are at the same 4-qubit, $2\times2\times4$ and $4\times4$ states, and their ranges have product vectors. One of the six UPBs turns out to be orthogonal to an almost genuinely entangled space, in the sense that the latter does not contain $4\times4$ product vector in any bipartition of 4-qubit systems. We also show that the multipartite UPB orthogonal to a genuinely entangled space exists if and only if the $n\times n\times n$ UPB orthogonal to a genuinely entangled space exists for some $n$. These results help understand an open problem in [Phys. Rev. A 98, 012313, 2018].  
\end{abstract}

\maketitle 


\section{Introduction}
\label{sec:int}
 
The unextendible product basis (UPB) has been extensively useful in the study of positive-partial-transpose (PPT) entangled states, symmetric PPT states, Bell inequalities and fermionic system \cite{AL01,dms03,Chen2013The, Tura2012Four, Chen2014Unextendible, Augusiak2012Tight}. Recently it has been shown that there exists a non-orthogonal UPB orthogonal to a genuinely entangled (GE) subspace  \cite{PhysRevA.98.012313}. In the same paper, an open problem was proposed to ask whether the multipartite UPB orthogonal to a GE subspace exists. In this paper, we shall 
construct the $4\times4$ UPBs using the 4-qubit system. We apply the UPBs to construct PPT entangled states and 
an almost GE space, so as to approach the open problem. 

The multiqubit system can be reliably constructed in experiments \cite{Dicarlo2010Preparation,Reed2012Realization}. The multiqubit UPBs have been more and more studied theoretically   \cite{Chen2018Nonexistence,Johnston2014The,1751-8121-50-39-395301,Joh13}. Nevertheless, quantum-information tasks often deal with entangled states of high dimensions, and we need to construct UPBs of high dimensions using multiqubit UPBs. The traditional idea \cite{BDM+99} relies on the assumption that the range of constructed PPT entangled states is orthogonal to a UPB, and thus has no product state. It is an interesting problem to construct PPT entangled states using a proper subset of UPBs, so that the range of PPT entangled states has product states. Compared to the traditional idea, the construction would help create more PPT entangled states of high dimensions and more complex properties, and shows the power of UPBs we have not realized so far. This is the first motivation of this paper. 

Next, the GE state is a mixed state, which is not the convex sum of product states with respect to any bipartition of systems \cite{PhysRevLett.86.910,PhysRevLett.99.250405}. Physically, the GE state need be constructed using at least one GE pure state. The GE states such as the Greenberger-Horne-Zeilinger (GHZ) states, W states and their copies \cite{Chen2018The} play a key role in quantum  communication and computing. However it is a hard problem to determine whether a given $n$-partite state is a GE state. For $n=2$, the problem reduces to the well-known separability problem \cite{Gurvits2003Classical}. The problem has received much attentions in theory and experiment \cite{Monteiro2015Revealing, Yeo2006Teleportation, Huber2014Witnessing, Cabello2009Proposed,Kraft2018Characterizing,T2005Detecting,Huber2010Detection}. Very recently, Ref. \cite{PhysRevA.98.012313} constructed the notion of multipartite GE spaces containing only GE states. In other word, any pure state in the GE space is not a product vector with respect to any bipartition of systems. Ref. \cite{PhysRevA.98.012313} constructed a non-orthogonal UPB \footnote{The non-orthogonal UPB is a set of product vectors that are not orthogonal to any product vector at the same time} orthogonal to a GE space. It remains an open problem whether there exists a UPB orthogonal to a GE space. The positive answer of this problem would connect the two important notions, and thus motivate progress on the study of both of them theoretically and experimentally. This is the second motivation of this paper.
 
In this paper we show that there are six $4\times4$ UPBs of size eight consisting of 4-qubit product vectors. It turns out to be much harder than the construction of $4\times4$ UPBs of size $6, 7$ and $9$ consisting of 4-qubit product vectors  \cite{1810.08932}. We do not rely on the classification of 4-qubit UPBs by programming in \cite{Johnston2014The}. We apply our results to construct PPT entangled states of rank nine. They are at the same 4-qubit, $2\times2\times4$ and $4\times4$ states, and their ranges have product vectors. We further show that a family of UPB is orthogonal to an almost GE space, in the sense that the latter does not contain any $2\times2\times4$ and $4\times4$ product vector. We also show that the multipartite UPB orthogonal to a GE space exists if and only if the $n\times n\times n$ UPB orthogonal to a GE space exists for some integer $n$. These results help understand the answer to the open problem in  \cite{PhysRevA.98.012313}.

The rest of this paper is structured as follows. In Sec. \ref{sec:pre} we introduce the notions and facts such as UPBs and UOMs. For the convenience of readers, we summary our results of six $4\times4$ UPBs of size eight in Sec. \ref{sec:resultsummary}. We present two applications of our results in Sec. \ref{sec:app1} and \ref{sec:app2}, respectively. Finally we conclude in Sec. \ref{sec:con}.

\section{Preliminaries}
\label{sec:pre}

We refer to the 4-qubit subspace as $\cH_{ABCD}=\cH_A\otimes\cH_B\otimes\cH_C\otimes\cH_D=\bbC^2\otimes\bbC^2\otimes\bbC^2\otimes\bbC^2$.
For $X=A,B,C,D$, we refer to $\ket{\ps_i}\in\cH_X$ as a 2-dimensional vector.
The {\em product vector in $\cH_{ABCD}$} is a 4-partite nonzero vector of the form
$\ket{\psi_1}\ox\ket{\psi_2}\otimes\ket{\psi_3}\ox\ket{\psi_4}:=\ket{\psi_1,\ps_2,\ps_3,\psi_4}$. Suppose $\{\ket{0},\ket{1}\}$ is the computational basis in $\bbC^2$. For any alphabet say $a$, we shall refer to $\ket{a},\ket{a'}$ as a different orthonormal basis in $\bbC^2$, i.e., $\ket{a}$ is not orthogonal to $\ket{0}$ and $\ket{1}$.
One can similarly define the $n$-partite product vectors in the space $\cH=\cH_1\otimes...\otimes\cH_n$. The set of $n$-partite orthonormal product vectors $\{\ket{a_{i,1}},...,\ket{a_{i,n}}\}$ is a UPB in $\cH$ if there is no $n$-partite product vector orthogonal to the set. We shall use the following two properties of UPBs in the body and appendices of this paper. If we obtain one UPB from another by using the properties, then we say that the two UPBs are equivalent. The properties will greatly simplify the determination of UPBs. 
\begin{lemma}
\label{le:equiv}
(i) If $\{\ket{a_{i,1},...,a_{i,n}}\}_{i=1,...,m}$ is an $n$-partite UPB of size $m$ then so is $\{\ket{a_{i,\s(1)},...,a_{i,\s(n)}}\}_{i=1,...,m}$, where $\s$ is an index permutation. That is, if we switch arbitrarily the systems of a UPB then we obtain another UPB.

(ii) If $\{\ket{a_{i,1},...,a_{i,n}}\}_{i=1,...,m}$ is an $n$-partite UPB of size $m$ then so is $\{U_1\ket{a_{i,1}}\ox...\ox U_n\ket{a_{i,n}}\}_{i=1,...,m}$, where $U_1,...,U_n$ are arbitrary unitary matrices. That is, performing any product unitary transformation $U_1\ox...\ox U_n$ on a UPB produces another UPB. 
\end{lemma}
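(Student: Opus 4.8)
The plan is to recognize that both operations in the lemma --- permuting the tensor factors, and applying a local unitary --- are instances of one mechanism: each is implemented by a unitary operator on the full space $\cH=\cH_1\ox\cdots\ox\cH_n$ which restricts to a \emph{bijection} of the set of product vectors onto itself. Granting this, both conclusions follow at once. Orthonormality of the transformed family is immediate, since a unitary preserves all inner products, hence sends an orthonormal set to an orthonormal set. Unextendibility transfers by contraposition: were some product vector orthogonal to every member of the transformed family, then applying the inverse unitary (again a bijection of product vectors onto themselves) would produce a product vector orthogonal to every member of the original family, contradicting that the original set is a UPB.

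For part (ii), I would set $U:=U_1\ox\cdots\ox U_n$, a unitary on $\cH$; then the transformed family is exactly $\{\,U\ket{a_{i,1},\ldots,a_{i,n}}\,\}_{i=1}^{m}$, hence orthonormal. For unextendibility, suppose a product vector $\ket{b_1,\ldots,b_n}$ is orthogonal to every $U_1\ket{a_{i,1}}\ox\cdots\ox U_n\ket{a_{i,n}}$. Then $\ket{U_1^\dg b_1,\ldots,U_n^\dg b_n}$ is again a product vector, and for each $i$
\[
\prod_{j=1}^{n}\braket{b_j}{U_j a_{i,j}}=\prod_{j=1}^{n}\braket{U_j^\dg b_j}{a_{i,j}}=\braket{U_1^\dg b_1,\ldots,U_n^\dg b_n}{a_{i,1},\ldots,a_{i,n}},
\]
which must then vanish --- contradicting that the original set is a UPB. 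Hence the transformed family is a UPB.

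For part (i), I would run the same argument with the tensor-factor permutation unitary $P_\s$ on $\cH$ defined by $P_\s\ket{v_1,\ldots,v_n}=\ket{v_{\s(1)},\ldots,v_{\s(n)}}$ (extended linearly), which is unitary (it permutes a product basis), carries product vectors bijectively to product vectors with inverse $P_{\s^{-1}}$, and whose image of the original UPB is precisely $\{\ket{a_{i,\s(1)},\ldots,a_{i,\s(n)}}\}_{i=1}^{m}$. The only step needing a line of care is the index bookkeeping behind the fact that $P_\s$ respects orthogonality to product vectors: expanding $\braket{b_1,\ldots,b_n}{a_{i,\s(1)},\ldots,a_{i,\s(n)}}=\prod_{j=1}^{n}\braket{b_j}{a_{i,\s(j)}}$ and reindexing the product via $k=\s(j)$ shows it equals $\braket{b_{\s^{-1}(1)},\ldots,b_{\s^{-1}(n)}}{a_{i,1},\ldots,a_{i,n}}$, so a product vector is orthogonal to the $\s$-permuted family exactly when the correspondingly permuted product vector is orthogonal to the original family.

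There is no genuine obstacle here. The only two points to stay alert to are verifying surjectivity of the relevant unitary on the set of product vectors --- without it the transferred family could a priori be extendible even when the original is not --- and keeping $\s$ and $\s^{-1}$ straight in part (i).
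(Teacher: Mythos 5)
Your proof is correct. The paper states this lemma without proof, treating both properties as standard facts about UPBs, so there is no in-paper argument to compare against; your unified treatment (each operation is a unitary on the full space that restricts to a bijection of the set of product vectors, so orthonormality is preserved and unextendibility transfers by contraposition via the inverse) is a complete and clean justification, including the correct $\s$ versus $\s^{-1}$ bookkeeping in part (i).
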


We further need the notion of \textit{unextendible orthogonal matrix (UOM)} \cite[p1]{Chen2018Multiqubit}. To understand the notion, we refer to product vectors of an $n$-qubit UPB of size $m$ as row vectors of an $m\times n$ matrix. The matrix is known as the UOM of the UPB. For orthogonal qubit states $\ket{x}$ and $\ket{x'}$ we shall refer to them as the \textit{vector variables} $x$ and $x'$ in UOMs, and vice versa. For example, the three-qubit UPB $\ket{0,0,0},\ket{1,y,z},\ket{x,1,z'},\ket{x',y',1}$ can be expressed as the UOM
\begin{eqnarray}
\label{eq:3qubit}
\bma
0&0&0\\
1&y&z\\
x&1&z'\\
x'&y'&1\\
\ema.	
\end{eqnarray}
where $x,y,z\ne0,1$. The first column of this UOM has only one \textit{independent} vector variable $x$, since $x'$ represents the qubit orthogonal to $\ket{x}$ up to global factors. Since the product vectors in the UPB are orthogonal, we say that the rows of UOM are also orthogonal. Further more we refer to the $k$'th column of a UOM as the counterpart of the $k$'th qubit of the corresponding UPB, and vice versa. So we can simply refer to the qubits of UPBs or UOMs throughout the paper. 

Furthermore, if the four-qubit UPB is still a UPB in $\bbC^2\otimes\bbC^2\otimes\bbC^4$ or $\bbC^4\otimes\bbC^4$, then we shall refer to the corresponding UOM as a UOM in $\bbC^2\otimes\bbC^2\otimes\bbC^4$ or $\bbC^4\otimes\bbC^4$, respectively. These notations will simplify our arguments in this paper.

%

\section{The summary of $4\times4$ UPBs of size eight}
\label{sec:resultsummary}

We present the six matrices $F_1,F_2,...,F_6$ in \eqref{eq:UOM1-summary}-\eqref{eq:UOM7i2=i4'-summary} in Appendix \ref{app:uom,f1-f7}. The inequalities for entries in $F_i$'s are satisfied if $F_i$ corresponds to a UPB of size eight in $\bbC^4\otimes\bbC^4$. To explain the details, we construct the matrices such as $F_1(i_3=i_4')$ in \eqref{eq:f2,i3=i4'}, when the two vector variables $i_3$ and $i_4'$ are the same. We will explain with details why only $F_1,F_2,...,F_6$ may be  UOMs in $\bbC^4\otimes\bbC^4$ using 4-qubit systems in Appendix \ref{app:construct f1-f6}. 

This section consists of two parts. First we prove that $F_1,F_2,...,F_6$ are indeed UOMs in $\bbC^4\otimes\bbC^4$. Lemma \ref{le:equiv} (i) allows the operation of permuting column $1,3$ and $2,4$ at the same time, Lemma \ref{le:equiv} (ii) allows the operations (ii.a) permuting column $1,2$, and (ii.b) permuting column $3,4$. Second we show 
$F_1,F_2,...,F_6$ are inequivalent in terms of the above three operations. 

First, we assume that 
\begin{eqnarray}
\cF_j=\{\ket{a_{ji},b_{ji},c_{ji},d_{ji}},i=1,2,...,8\},	
\end{eqnarray}
is the set of product vectors defined by the matrices $F_j$, $j=1,2,...,6$. We have the following observation. 
\begin{lemma}
\label{le:fj=property}	
(i) For any $j$, any five two-qubit product vectors of the set $\{\ket{a_{ji},b_{ji}},i=1,2,...,8\}$ span $\bbC^4$; any four of the set span a subspace of dimension three or four.

(ii) For any $j$, any five two-qubit product vectors of the set $\{\ket{c_{ji},d_{ji}},i=1,2,...,8\}$ span $\bbC^4$; any four of the set span a subspace of dimension three or four.

(iii) Suppose the set of four distinct two-qubit product vectors  $\ket{a_{ji},b_{ji}},i=1,2,3,4$ spans a 3-dimensional subspace in $\bbC^4$. Then $a_{j\s(1)}=a_{j\s(2)}$ and 
$b_{j\s(3)}=b_{j\s(4)}$ for an index permutation $\s$.

(iv) Suppose the set of four distinct two-qubit product vectors $\ket{c_{ji},d_{ji}},i=1,2,3,4$ spans a 3-dimensional subspace in $\bbC^4$. Then $c_{j\s(1)}=c_{j\s(2)}$ and 
$d_{j\s(3)}=d_{j\s(4)}$ for an index permutation $\s$. 
\end{lemma}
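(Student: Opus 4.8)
The plan is to reduce all four parts to a finite inspection of the six explicit matrices $F_1,\dots,F_6$ of Appendix~\ref{app:uom,f1-f7}, together with the entry inequalities imposed there; by Lemma~\ref{le:equiv} it is enough to treat the $F_j$ themselves. Parts (i) and (ii) differ only by interchanging the column pair $(1,2)$ with $(3,4)$, and so do (iii) and (iv); hence it suffices to prove (i) and (iii), the other two following by the same argument applied to columns $3,4$ (equivalently, because the list $F_1,\dots,F_6$ is closed under the $A\leftrightarrow C$, $B\leftrightarrow D$ interchange of Lemma~\ref{le:equiv}). Also, inside (i) the clause ``any four span a subspace of dimension three or four'' is a formal consequence of ``any five span $\bbC^4$'': if four of the $\ket{a_{ji},b_{ji}}$ spanned dimension $\le2$, then adjoining any fifth of the eight would give five vectors spanning dimension $\le3$. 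So the real content of (i) is that no three-dimensional subspace of $\cH_A\ox\cH_B$ contains five of the eight vectors $\ket{a_{ji},b_{ji}}$.

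For that I would argue by contradiction. Suppose the AB-parts of rows $i=1,\dots,5$ lie in a $3$-dimensional $V\su\cH_A\ox\cH_B=\bbC^4$. Choose $0\ne\ket{w}\perp V$ in $\bbC^4$, and $0\ne\ket{y}\in\cH_C\ox\cH_D=\bbC^4$ orthogonal to the three remaining CD-parts $\ket{c_{j6},d_{j6}},\ket{c_{j7},d_{j7}},\ket{c_{j8},d_{j8}}$, which exists because three vectors cannot span $\bbC^4$. Then $\ket{w}\ox\ket{y}$ is a product vector in $\bbC^4\ox\bbC^4$ orthogonal to every member of $\cF_j$, contradicting its unextendibility in $\bbC^4\ox\bbC^4$. (If one prefers not to invoke the $\bbC^4\ox\bbC^4$ unextendibility of $F_j$ at this point, the same conclusion follows by direct inspection instead: in each $F_j$ columns $1$ and $2$ carry only a few independent vector variables, and the inequalities make the $\ket{a_{ji},b_{ji}}$ generic enough that every five of them contain four that are linearly independent.) Part (ii) is the identical argument with the two halves exchanged.

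For (iii), let $\ket{a_{ji},b_{ji}}$, $i\in I$, $|I|=4$, be distinct and span a $3$-dimensional $V\su\cH_A\ox\cH_B$, and let $\ket{w}$ span the line $V^{\perp}$. The product vectors orthogonal to $\ket{w}$ form a curve that is parametrized one-to-one by the $A$-factor (and by the $B$-factor) when $\ket{w}$ is entangled, so in that case the four $\ket{a_{ji}}$ are pairwise distinct rays and likewise the four $\ket{b_{ji}}$; I would then check, matrix by matrix and using the inequalities, that no $F_j$ has such a rank-deficient AB-quadruple. Hence $\ket{w}=\ket{e}\ox\ket{f}$ is a product vector, and each $\ket{a_{ji},b_{ji}}$ with $i\in I$ lies in $\ket{e'}\ox\cH_B$ or in $\cH_A\ox\ket{f'}$, where $\ket{e'}\perp\ket{e}$ and $\ket{f'}\perp\ket{f}$. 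If $I$ splits $(2,2)$ between the two lines, the two vectors on the first share the $A$-value $\ket{e'}$ and the two on the second share the $B$-value $\ket{f'}$, which after relabelling is precisely $a_{j\s(1)}=a_{j\s(2)}$ and $b_{j\s(3)}=b_{j\s(4)}$. A $(3,1)$ split either already exhibits this pattern (when the singleton shares its $A$-value, or its $B$-value, with one of the triple) or is a configuration one rules out by inspecting the explicit $F_j$. Part (iv) is the same argument on columns $3,4$.

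The step I expect to be the real obstacle is not any single piece of linear algebra but the bookkeeping in (iii) and (iv): one must run through the six matrices together with their degenerate specializations (such as $F_1(i_3=i_4')$, where two vector variables coincide), track all the entry inequalities, and rule out exactly the ``exotic'' rank-deficient AB- and CD-quadruples --- those whose orthogonal complement is entangled, and those that split $(3,1)$ without the $aa|bb$ pattern. These are precisely the configurations for which the statement is false for an arbitrary family of product vectors (one can write down four distinct product vectors in $\bbC^2\ox\bbC^2$ spanning a $3$-space with all four $A$-values and all four $B$-values distinct), so the argument genuinely has to exploit the explicit entries of the $F_j$ and cannot be replaced by an abstract appeal to the UPB/UOM properties.
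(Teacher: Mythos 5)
Your overall strategy --- reduce all four parts to a finite inspection of the explicit matrices $F_1,\dots,F_6$ and their entry inequalities, derive the ``any four span dimension $\ge 3$'' clause from the ``any five span $\bbC^4$'' clause, and obtain (ii)/(iv) from (i)/(iii) by swapping the column pairs --- is the same as the paper's. But the primary argument you offer for (i) has a genuine logical flaw: you derive the contradiction from the unextendibility of $\cF_j$ in $\bbC^4\ox\bbC^4$, and inside this paper that is circular. Lemma \ref{le:fj=property} is proved \emph{before}, and is the key ingredient of, the proof that $F_1,\dots,F_6$ are UOMs in $\bbC^4\ox\bbC^4$: the paper's argument immediately after the lemma shows that a hypothetical $\ket{x,y}$ orthogonal to $\cF_1$ would have to annihilate four of the $AB$-parts and four of the $CD$-parts (by (i)--(ii)), and then uses (iii)--(iv) to rule such quadruples out. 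At the point where the lemma is being established, the only available facts about the $F_j$ are that they are $4$-qubit UPBs obeying the listed inequalities --- their $4\times4$ unextendibility is precisely the conclusion being worked toward. So your parenthetical fallback (direct inspection of columns $1,2$ of each $F_j$, using that they carry only two or three independent vector variables and that the inequalities force four linearly independent vectors among any five) is not an optional alternative; it is the only admissible route here, and it is exactly what the paper does.

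Apart from that, your treatment of (iii)/(iv) --- splitting on whether the vector $\ket{w}$ spanning $V^\perp$ is entangled or a product $\ket{e,f}$, and in the latter case analysing the $(2,2)$ versus $(3,1)$ distribution of the quadruple between the lines $\ket{e'}\ox\cH_B$ and $\cH_A\ox\ket{f'}$ --- is a cleaner organization than the paper's bare ``verified directly or by programming,'' and your closing observation is exactly right: the statement is false for generic families of product vectors, so the proof must terminate in a matrix-by-matrix check of the six $F_j$ together with their degenerate specializations such as $F_1(i_3=i_4')$. That residual bookkeeping is where the actual content lies in both your version and the paper's.
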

\begin{proof}
(i) The first claim of assertion (i) can be proven by checking the first two columns of matrices $F_1,F_2,...,F_6$. In fact, there exist four linearly independent two-qubit product vectors in any five. The second claim of assertion (i) is a corollary of the first claim.

(ii) Using the similar argument to the first two columns, the first claim of assertion (ii) can be proven by checking the last two columns of matrices $F_1,F_2,...,F_6$. The second claim of assertion (ii) is a corollary of the first claim.

(iii), (iv) The assertions can be verified directly or by programming.   
\end{proof}

We are now in a position to prove that $F_1,F_2,...,F_6$ in \eqref{eq:UOM1-summary}-\eqref{eq:UOM7i2=i4'-summary} are UOMs. Suppose there exists a product vector $\ket{x,y}\in\bbC^4\otimes\bbC^4$ orthogonal to $\cF_1$. Lemma \ref{le:fj=property}	 (i) and (ii) show that $\ket{x}$ is orthogonal to four states of $\ket{a_{1i},b_{1i}},i=1,2,...,8$, and $\ket{y}$ is orthogonal to four states of $\ket{c_{1i},d_{1i}},i=1,2,...,8$. Using $F_1$,
Lemma \ref{le:fj=property} (iii) and (iv), one can show that such two sets of four states do not exist. So $\ket{x,y}$ does not exist, and $F_1$ is a UOM. One can similarly prove that $F_2,F_3,...,F_6$ are UOMs.

In the second part of this section, we explain the inequivalence of $F_1,F_2,...,F_6$. We refer readers to Table \ref{tab:f1-7} for the maximum number of independent vector variables in each columns of the UOMs. Since only the UOMs with identical number may be equivalent, we obtain that $F_4$ is not equivalent to any other $F_j$'s. Further, $F_2,F_6$ are not equivalent to any one of $F_1,F_3$ and $F_5$. Next, $F_3$ and $F_6$ are not equivalent because column 1 and 3 of $F_2$ have the submatrix $\bma0&0\\0&0\ema$, and column 1 (or 2) and 3 of $F_6$ do not.
One can similarly show that $F_1,F_3,F_5$ are not equivalent each other.

\begin{table}[h] 
$$
\begin{array}{cclc}
\text{UOM} & ~ & \text{maximum number of independent variables} & \\
\hline
F_1 && 
\quad\quad\quad\quad\quad\quad\quad 
[2,2,2,3]  \\
F_2 && 
\quad\quad\quad\quad\quad\quad\quad 
[2,2,2,4]  \\
F_3 && 
\quad\quad\quad\quad\quad\quad\quad 
[2,2,3,2]  \\
F_4 && 
\quad\quad\quad\quad\quad\quad\quad 
[2,3,2,3]  \\
F_5 && 
\quad\quad\quad\quad\quad\quad\quad 
[3,2,2,2]  \\
F_6 && 
\quad\quad\quad\quad\quad\quad\quad 
[2,2,2,4]  \\
\hline \\
\end{array}
$$
\caption{For any $x$, we say that the pair $x,x'$ contributes only one independent vector variable. For the UOM $F_1$, the array $[2,2,2,3]$ means that each of column $1,2,3$ of $F_1$ has exactly $2$ independent vector variables, and column $4$ of $F_1$ has exactly $3$ independent vector variables. The arrays for other $F_j$'s are similarly defined.}
\label{tab:f1-7}
\end{table}

\section{Application 1: constructing PPT entangled states using a proper subset of UPB}
\label{sec:app1}

In this section we construct PPT 
entangled states using the six UPBs in 
Sec. \ref{sec:resultsummary}. The traditional idea is that $\r={1\over \abs{\cS}}(I-\sum_{j\in\cS}\proj{x_j})$ is a PPT entangled state if the set of orthogonal product vectors $\{\ket{x_j},j\in\cS\}$ is a UPB, then the range of $\r$ has no product vectors. Different from the idea, we show that every UPB of the UOMs $F_1,...,F_6$ has a proper subset of cardinality seven, such that they span a subspace whose orthogonal complement space is the range of a PPT entangled state of rank nine. It sheds novel light on the construction of PPT entangled states using UPBs. This is a corollary of the following observation.
\begin{lemma}
\label{le:pptes=fewprod}
Suppose $d=d_1d_2...d_n$, and $\ket{x_1},...,\ket{x_m}\in\bbC^d=\bbC^{d_1}\otimes...\otimes\bbC^{d_n}$'s are $m$ orthonormal product vectors. 

(i) If the range of $\r={1\over d-m}(I_d-\sum^m_{j=1}\proj{x_j})$ contains at most $d-m-1$ linearly independent product vectors, then $\r$ is a PPT entangled state of rank $d-m$.

(ii) Suppose $\ket{y_1},...,\ket{y_m}\in\bbC^{d}$ are $m$ orthonormal product vectors. If the UOMs of the two sets $\ket{x_i}$'s and $\ket{y_i}$'s are locally equivalent, then the numbers of product vectors orthogonal to the two sets are the same, or are both infinite.
\end{lemma}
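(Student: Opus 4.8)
The plan is to handle the two assertions by quite different means, since (i) concerns a single state and (ii) an invariance property.

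For (i) I would first note that, because the $\ket{x_j}$ are \emph{orthonormal}, $I_d-\sum_{j=1}^m\proj{x_j}$ is precisely the orthogonal projector onto $V^\perp$ where $V:=\lin\{\ket{x_1},\dots,\ket{x_m}\}$ has dimension $m$; hence $\r$ is a normalized rank-$(d-m)$ projector and $\rank\r=d-m$. For positivity of partial transposes I would fix an arbitrary bipartition $T\,|\,\bar T$ of the $n$ parties and use that each $\ket{x_j}$ is a product vector: $(\proj{x_j})^{\G_T}$ is again a rank-one projector $\proj{x_j^{\G_T}}$ onto a product vector (the $T$-factors get complex-conjugated), and $\braket{x_i}{x_j}=0$ forces some tensor factor of $\ket{x_i}$ and $\ket{x_j}$ to be orthogonal, a property that survives conjugation, so $\{\ket{x_j^{\G_T}}\}_{j=1}^m$ is again orthonormal. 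Hence $\r^{\G_T}=\tfrac1{d-m}\bigl(I_d-\sum_j\proj{x_j^{\G_T}}\bigr)\ge0$, and since $T$ was arbitrary $\r$ is PPT. For entanglement I would argue by contradiction: a separable $\r$ can be written $\r=\sum_k p_k\proj{\phi_k}$ with $p_k>0$ and each $\ket{\phi_k}$ a product vector, and evaluating $\bra{v}\r\ket{v}$ shows $\ker\r=\lin\{\ket{\phi_k}\}^\perp$, so $\mathop{\rm range}\r=\lin\{\ket{\phi_k}\}$ is spanned by product vectors it contains; but $\mathop{\rm range}\r$ has dimension $d-m$ while by hypothesis it contains at most $d-m-1$ linearly independent product vectors --- a contradiction. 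Combining the three facts gives that $\r$ is a PPT entangled state of rank $d-m$.

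For (ii) the idea is that ``the UOMs of the two sets are locally equivalent'' means $\{\ket{x_i}\}$ and $\{\ket{y_i}\}$ are related by a product unitary $W=W_1\ox\cdots\ox W_n$ with respect to the factorization $\bbC^d=\bbC^{d_1}\ox\cdots\ox\bbC^{d_n}$, possibly preceded by a permutation $\p$ of equal-dimensional factors and followed by a rescaling and reindexing of the $m$ vectors --- precisely the operations allowed by Lemma \ref{le:equiv}. I would then define $\Phi$ on product vectors of $\bbC^d$ by $\Phi(\ket{z_1}\ox\cdots\ox\ket{z_n})=W_1\ket{z_{\p^{-1}(1)}}\ox\cdots\ox W_n\ket{z_{\p^{-1}(n)}}$; this sends product vectors to product vectors, is a bijection on them, and preserves inner products up to phase, hence preserves orthogonality. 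Consequently $\ket{z}$ is a product vector orthogonal to all $\ket{x_i}$ if and only if $\Phi(\ket{z})$ is a product vector orthogonal to all $\ket{y_i}$, so $\Phi$ restricts to a bijection between the set of product vectors (as pure states) orthogonal to $\{\ket{x_i}\}$ and that orthogonal to $\{\ket{y_i}\}$; two sets in bijection have the same cardinality, or are both infinite.

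I expect the only genuinely delicate point to lie in (ii): making precise which moves ``local equivalence of UOMs'' is permitted to include --- in particular that the column operations used in Sec.~\ref{sec:resultsummary} (swaps within $\{1,2\}$, swaps within $\{3,4\}$, and the simultaneous $1\!\leftrightarrow\!3$, $2\!\leftrightarrow\!4$ swap) are all realized by unitaries local with respect to the $\bbC^2\ox\bbC^2\ox\bbC^4$ or $\bbC^4\ox\bbC^4$ structure --- and in being careful that the bijection is between \emph{rays}, so that arbitrary phases and the reindexing of the orthonormal set are harmless. Part (i) is, by contrast, a routine adaptation of the standard construction of PPT entangled states from unextendible product bases, the only new wrinkle being that we assume merely an upper bound on the number of linearly independent product vectors in the range rather than their complete absence.
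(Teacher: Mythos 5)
Your proof is correct and follows essentially the same route as the paper's: for (i) the entanglement is obtained by the identical contradiction argument (a separable $\r$ would force its range to be spanned by $d-m$ linearly independent product vectors), and for (ii) you build the same local-unitary-plus-permutation correspondence between orthogonal product vectors, packaged as a single bijection where the paper gives an injection in each direction. You are in fact slightly more complete than the paper, which leaves the PPT property and the rank computation implicit.
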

\begin{proof}
(i) We prove the assertion by contradiction. Suppose $\r$ is a separable state. Let $\r=\sum_ip_i\proj{a_i}$ where $\ket{a_i}\in\bbC^d$ are product vectors. So the set $\{\ket{a_i}\}$ has exactly $d-m$ linearly independent vectors. Since $\{\ket{a_i}\}$ spans the range of $\r$, the latter also has exactly $d-m$ linearly independent vectors. It is a contradiction with the hypothesis that $\cR(\r)$ has at most $d-m-1$ linearly independent vectors. So assertion (i) holds.

(ii) Let $\ket{b}$ be a product vector orthogonal to $\ket{x_i}$'s. Since the latter is locally equivalent to 	$\ket{y_i}$'s, there is a local unitary matrix $U$ such that $U\ket{x_i}=P\ket{y_{\s(i)}}$, $\forall i$ up to a vector permutation matrix $P$ and an index permutation $\s$. So $P^\dag U\ket{b}$ is a product vector orthogonal to $\ket{y_i}$'s. 

Let $X,Y$ be the numbers of product vectors orthogonal to the two sets $\{\ket{x_i}\}$ and $\{\ket{y_i}\}$, respectively. The last paragraph shows that $X\le Y$. If we switch $x_i$ and $y_i$ in the last paragraph, then the argument still holds. We have $X\ge Y$. So we have $X=Y$, and the assertion holds.
\end{proof}

In the following we apply the above lemma to constructing PPT entangled states.
By deleting the $i$'th product vector $\ket{a_{ji},b_{ji},c_{ji},d_{ji}}\in\cF_j$, we refer to 
$\cS_{ji}$ as the set of remaining seven product vectors for $i=1,2,...,8$ and $j=1,2,..,6$. That is
\begin{eqnarray}
\label{eq:csji}
\cS_{ji}=\cF_j \sm 
\{
\ket{a_{ji},b_{ji},c_{ji},d_{ji}}
\}.
\end{eqnarray}
Let $S_{ji}$ be the UOM of $\cS_{ji}$, $\cT_{ji}$ the set of $4\times4$ product vectors orthogonal to $\cS_{ji}$, and $T_{ji}$ the UOM of $\cT_{ji}$. We present the following observation.  

\begin{lemma}
\label{le:t11=3}
(i) $\abs{\cT_{11}}=4$ or $6$. The latter holds if and only if $i_3=i_4'$. 

(ii) $\abs{\cT_{21}(i_2=i_3, i_4=0)}=6$.

(iii) $\abs{\cT_{31}}=4$ or $5$. The latter holds if and only if $i_3=i_4'$. 

(iv) $\abs{\cT_{41}}=4$.

(v) $\abs{\cT_{51}}=4$ or $6$. The latter holds if and only if $i_5=i_6'$. 

(vi) $\abs{\cT_{61}}=6$.
\end{lemma}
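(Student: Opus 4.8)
The plan is to reduce each assertion to a finite inspection of the explicit matrices $F_1,\dots,F_6$ of Appendix~\ref{app:uom,f1-f7}, using Lemmas~\ref{le:equiv}, \ref{le:fj=property} and \ref{le:pptes=fewprod}(ii). Fix $j$ and write the seven rows of $\cS_{j1}$ as $\ket{a_{ji},b_{ji},c_{ji},d_{ji}}$, $i=2,\dots,8$. In the $4\times4$ bipartition $AB|CD$ (so $\bbC^4=\bbC^2\ox\bbC^2$) a product vector $\ket{x}\ox\ket{y}$ is orthogonal to $\cS_{j1}$ iff for every $i\in\{2,\dots,8\}$ one has $\braket{x}{a_{ji},b_{ji}}=0$ or $\braket{y}{c_{ji},d_{ji}}=0$. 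Put $A_x=\{i:\ket{x}\perp\ket{a_{ji},b_{ji}}\}$ and $A_y=\{i:\ket{y}\perp\ket{c_{ji},d_{ji}}\}$; a solution is a pair $(\ket{x},\ket{y})$ with $A_x\cup A_y=\{2,\dots,8\}$.

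By Lemma~\ref{le:fj=property}(i)--(ii) any five of the $\ket{a_{ji},b_{ji}}$, and any five of the $\ket{c_{ji},d_{ji}}$, span $\bbC^4$; hence $\abs{A_x}\le4$ and $\abs{A_y}\le4$, which forces $\abs{A_x\cap A_y}\le1$ and either $\{\abs{A_x},\abs{A_y}\}=\{3,4\}$ or $\abs{A_x}=\abs{A_y}=4$. When $\abs{A_x}=4$, the four vectors $\ket{a_{ji},b_{ji}}$ with $i\in A_x$ do not span $\bbC^4$, hence span exactly a $3$-space by Lemma~\ref{le:fj=property}(i), hence (Lemma~\ref{le:fj=property}(iii)) have the ``double pair'' pattern: two of the $a$-entries coincide and, among the other two rows, the two $b$-entries coincide. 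Then $\ket{x}$ is the unique (up to scalar) vector in the one-dimensional orthogonal complement of their span. The three vectors $\ket{c_{ji},d_{ji}}$ with $i\notin A_x$ span fewer than three dimensions only if all three share the same $c$-entry or the same $d$-entry, since a $2$-dimensional subspace of $\bbC^2\ox\bbC^2$ containing three product vectors must be $\ket{c}\ox\bbC^2$ or $\bbC^2\ox\ket{d}$; this is checked directly on $F_j$, and when it fails the three span a $3$-space and $\ket{y}$ is uniquely determined. The symmetric statements, with $(a,b)$ and $(c,d)$ interchanged and Lemma~\ref{le:fj=property}(iv) replacing (iii), cover $\abs{A_y}=4$.

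Consequently $\cT_{j1}$ is enumerated thus: (a) for every four-element $S\subseteq\{2,\dots,8\}$ with the double-pair pattern in columns $1,2$, form the unique $\ket{x}$, then the $\ket{y}$ forced by the complementary triple in columns $3,4$, and check that $\ket{x}\ox\ket{y}$ is orthogonal to all seven rows; (b) do the same with columns $1,2$ and $3,4$ interchanged; (c) discard the product vectors arising in both (a) and (b) --- precisely the solutions with $\abs{A_x}=\abs{A_y}=4$. Carrying this out on the explicit $F_1,\dots,F_6$ --- for (ii) on the specialisation $F_2$ with $i_2=i_3$, $i_4=0$, and using Lemma~\ref{le:pptes=fewprod}(ii) to transfer counts between locally equivalent sub-UOMs and between the parametrised instances --- gives the stated totals: there are always four solutions; $F_4$ contributes no more and $F_6$ contributes two more regardless of parameters; and for (i), (iii), (v) the extra solutions (two for (i) and (v), one for (iii)) appear exactly when $i_3=i_4'$ (respectively $i_5=i_6'$) produces an additional double-pair four-subset or collapses the span of a relevant triple to dimension three. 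The ``only if'' direction follows by checking that if that coincidence fails, no such extra subset or collapse exists, so only the four generic solutions survive.

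The main obstacle is step~(c) together with the ``if and only if'' clauses of (i), (iii), (v): one must show the degeneracy $i_3=i_4'$ (resp.\ $i_5=i_6'$) is simultaneously necessary and sufficient for the count to rise from four to five or six, which means tracking how the double-pair four-subsets of $\{2,\dots,8\}$ and the dimensions of the various three-row spans depend on the free vector variables of each $F_j$. I expect this to require a short but careful column-by-column comparison of the generic and degenerate configurations; Lemma~\ref{le:pptes=fewprod}(ii) and the column permutations allowed by Lemma~\ref{le:equiv} are used throughout to cut down the number of distinct cases.
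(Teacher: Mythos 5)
Your framework is exactly the paper's: Appendix~\ref{app:proof} proves each part by splitting into the two cases $\abs{A_x}=4,\abs{A_y}=3$ and $\abs{A_x}=3,\abs{A_y}=4$, using Lemma~\ref{le:fj=property}(iii)--(iv) to locate the rank-three quadruples of two-qubit rows, taking the unique vector in each orthogonal complement, and removing the overlap between the two cases. The only difference is that you describe this enumeration as a procedure and assert its outcome, whereas the finite check \emph{is} the paper's proof --- e.g.\ for $S_{11}$ it exhibits the four generic solutions $\ket{0,0,0,0}$, $\ket{f_5',0,h_3',0}$, $\ket{f_5,g_3,\b_3}$, $\ket{f_5,g_3',\b_4}$ and the two extra ones $\ket{\b_5,h_3,i_3}$, $\ket{\b_6,h_3,i_3'}$ that exist precisely when $i_3=i_4'$ --- so the column-by-column comparison you flag as an expected obstacle needs to be carried out rather than deferred.
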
 
We refer readers to its proof in Appendix \ref{app:proof}. Now we present the main theorem of this section. The first part of following theorem from the fact that $F_1,..,F_6$ are UOMs in $\bbC^4\otimes\bbC^4$. The second part follows from Lemma \ref{le:pptes=fewprod} and \ref{le:t11=3}. 
\begin{theorem}
\label{thm:pptes=rank9}
(i) Suppose $\cF$ is one of the six sets $\cF_1,\cF_2,\cF_3,\cF_4,\cF_5,\cF_6$. The state
\begin{eqnarray}
\a={1\over8}
(I-\sum_{\ket{\ps_j}\in \cF}\proj{\ps_j})
\end{eqnarray}
is at the same time a 4-qubit, $2\times2\times4$, and $4\times4$ PPT entangled state of rank eight.

(ii) Suppose $\cS$ is one of the six sets $\cS_{11},\cS_{21}(i_2=i_3, i_4=0),\cS_{31},\cS_{41},\cS_{51}$ and $\cS_{61}(i_2=i_3)$. The state
\begin{eqnarray}
\b={1\over9}
(I-\sum_{\ket{\ps_j}\in \cS}\proj{\ps_j})
\end{eqnarray}
is at the same time a 4-qubit, $2\times2\times4$, and $4\times4$ PPT entangled state of rank nine.
\end{theorem}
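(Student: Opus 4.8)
The plan is to reduce both statements to Lemma~\ref{le:pptes=fewprod}(i), applied once for each of the three tensor decompositions $\bbC^2\ox\bbC^2\ox\bbC^2\ox\bbC^2$, $\bbC^2\ox\bbC^2\ox\bbC^4$ and $\bbC^4\ox\bbC^4$ of $\bbC^{16}$. The elementary fact that glues everything together is that every $4$-qubit product vector, and every $2\times2\times4$ product vector, is in particular a $4\times4$ product vector (read it as a tensor over the bipartition $AB|CD$); hence any upper bound on the number of linearly independent $4\times4$ product vectors inside a subspace is simultaneously a bound on the number of linearly independent product vectors there in the two finer structures.

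First I would treat part (i). Section~\ref{sec:resultsummary} has already shown that each $F_j$ is a UOM in $\bbC^4\ox\bbC^4$, so $\cF_j$ is an $8$-element orthonormal set of $4$-qubit product vectors which is a UPB in $\bbC^4\ox\bbC^4$; consequently no $4\times4$ product vector is orthogonal to $\cF_j$, and therefore no $2\times2\times4$ and no $4$-qubit product vector is either. The range of $\a$ is the $8$-dimensional orthogonal complement of $\cF_j$, so it contains $0\le 16-8-1=7$ linearly independent product vectors in each of the three structures, and Lemma~\ref{le:pptes=fewprod}(i) with $d=16$, $m=8$ then gives at once that $\a$ is a PPT entangled state of rank $8$, simultaneously as a $4$-qubit, $2\times2\times4$ and $4\times4$ state.

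Next I would do part (ii). Fix one of the six listed sets, say $\cS=\cS_{j1}$; it is a set of $7$ orthonormal $4$-qubit product vectors, and by the matching item of Lemma~\ref{le:t11=3} the set $\cT_{j1}$ of $4\times4$ product vectors orthogonal to $\cS$ is \emph{finite} with $\abs{\cT_{j1}}\le 6$ (its value is $4$, $5$ or $6$ according to the case). The range of $\b$ is the $9$-dimensional orthogonal complement of $\cS$, and its $4\times4$ product vectors are exactly the elements of $\cT_{j1}$; hence the range contains at most $6$ linearly independent $4\times4$ product vectors, and, by the containment recalled above, at most $6$ linearly independent $2\times2\times4$ (resp.\ $4$-qubit) product vectors. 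Since $6\le 16-7-1=8$, Lemma~\ref{le:pptes=fewprod}(i) with $d=16$, $m=7$ applies in all three decompositions and shows that $\b$ is a PPT entangled state of rank $9$, at the same time a $4$-qubit, $2\times2\times4$ and $4\times4$ state.

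It is worth noting what does \emph{not} require separate work and where the real difficulty sits. The ``PPT'' half of the conclusion is not an independent hypothesis to discharge: for any product vectors $\ket{x_j}$ the partial transpose across an arbitrary bipartition carries the orthogonal projection onto $\lin\{\ket{x_j}\}$ to another orthogonal projection (complex conjugation of a local factor preserves orthonormality), so the partially transposed operator is again an identity minus an orthogonal projection, hence positive semidefinite for every bipartition; this positivity is already folded into Lemma~\ref{le:pptes=fewprod}(i), whose only nontrivial content is the ``entangled'' part, i.e.\ the range count. Thus the theorem is essentially bookkeeping once the two substantial inputs are available: that $F_1,\dots,F_6$ are UOMs in $\bbC^4\ox\bbC^4$, and the cardinalities of Lemma~\ref{le:t11=3}. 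The one place I would be careful not to slip is that what makes the argument for $\b$ go through is the \emph{finiteness} of $\cT_{j1}$, not merely a numerical bound on it: if some $\cS$ admitted infinitely many orthogonal $4\times4$ product vectors, the range of $\b$ would be spanned by product vectors and $\b$ could well be separable. This is exactly why the parameter specializations in the statement (for instance $i_2=i_3$ and $i_4=0$ in $\cS_{21}$, and $i_2=i_3$ in $\cS_{61}$) cannot be dropped, and why one must invoke the precise relevant item of Lemma~\ref{le:t11=3} for each $\cS$ rather than a uniform estimate.
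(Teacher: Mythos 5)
Your proposal is correct and follows the same route as the paper: part (i) from the fact that $F_1,\dots,F_6$ are UOMs in $\bbC^4\ox\bbC^4$, and part (ii) from Lemma \ref{le:pptes=fewprod}(i) combined with the cardinality bounds of Lemma \ref{le:t11=3}. You in fact supply details the paper leaves implicit — the containment of $4$-qubit and $2\times2\times4$ product vectors among $AB{:}CD$ product vectors, and the positivity of the partial transposes — and these are handled correctly.
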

We have demonstrated the idea of constructing a PPT entangled state using a proper subset of a UPB. We have used the subset in \eqref{eq:csji} by removing the first vector in $\cF_i$. One may construct more PPT entangled states in the same way, by removing one of the other vectors in $\cF_i$ for some $i$. We have found that some $\cS_{ji}$ has more than nine product vectors and thus Lemma \ref{le:pptes=fewprod} does not work here. 

Next, the states $\a$ and $\b$ in Theorem \ref{thm:geupb} are both of robust entanglement in the sense that they are entangled w.r.t. different partitions of systems. It is not close to the genuine entanglement, as the state may become unentangled if we switch the systems. This is a problem we will tackle in the next section. On the other hand from Lemma \ref{le:t11=3}, the range of states $\b$ constructed by $\cF_i$'s has product vectors. We are not sure whether the smaller subset of $\cF_i$ would generate PPT entangled states too. 

\section{Application 2: the UPB orthogonal to an almost GE space}
\label{sec:app2}

The multipartite GE 
space contains no 
bipartite product vectors w.r.t. any 
bipartition of systems 
\cite{PhysRevA.98.012313}. The GE space exists. For example, the one-dimensional subspace spanned by the multiqubit GHZ state is a GE space. Further, the GE space remains a GE space if it is multiplied by a local unitary transformation \cite{Baerdemacker2017The}. Just like the determination of GE states, characterizing the GE space of arbitrary dimension turns out to be a hard problem. 

The paper \cite{PhysRevA.98.012313}
has constructed nonorthogonal UPBs orthogonal to a GE space. It remains an open problem whether the UPB orthogonal to a GE space exists.     
In this section we construct a $4$-qubit UPB whose orthogonal space $\cG$ has no $4\times4$ product vectors w.r.t. any $4\times4$ bipartition of the 4-qubit space. It is known that the $2\times N$ UPB does not exist \cite{BDM+99}. So $\cG$ contains $2\times N$ product vectors, and it is not a GE space. Nevertheless, $\cG$ is still close to a GE space, and we refer to $\cG$ as an \textit{almost GE space}. In the following, the first main result of this section shows that 
the space orthogonal to the UPB $\cF_6$ in \eqref{eq:UOM7-summary} is an almost GE space. 
\begin{lemma}
\label{le:F6}
$\cF_6$ is a bipartite UPB across any one of the three partitions of systems, namely $AB:CD$, $AC:BD$ and $AD:BC$.
\end{lemma}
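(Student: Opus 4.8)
The plan is to verify that $\cF_6$, viewed as a set of $4\times4$ product vectors under each of the three bipartitions $AB:CD$, $AC:BD$, $AD:BC$, admits no orthogonal $4\times4$ product vector. Since $\cF_6$ is built from the matrix $F_6$ in \eqref{eq:UOM7-summary}, each bipartition corresponds to grouping two of the four columns of $F_6$ as the ``first party'' and the remaining two as the ``second party''. For the $AB:CD$ grouping this is already covered: the argument in Sec.~\ref{sec:resultsummary} showing $F_6$ is a UOM in $\bbC^4\otimes\bbC^4$ is exactly the statement that $\cF_6$ is a $4\times4$ UPB across $AB:CD$. So the real content is the other two partitions, $AC:BD$ and $AD:BC$.

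For each of these I would follow the same template used to prove $F_1,\dots,F_6$ are UOMs. First I would write out explicitly the eight two-qubit vectors $\ket{a_{6i},c_{6i}}$ (resp.\ $\ket{a_{6i},d_{6i}}$) forming the ``left'' $\bbC^4$ factor and the eight vectors $\ket{b_{6i},d_{6i}}$ (resp.\ $\ket{b_{6i},c_{6i}}$) forming the ``right'' $\bbC^4$ factor. The key combinatorial fact I need is the analogue of Lemma~\ref{le:fj=property} for these new groupings: that among the eight left two-qubit vectors any five span $\bbC^4$ (equivalently, any four span a subspace of dimension $3$ or $4$), and likewise on the right; and that whenever four of them span only a $3$-dimensional subspace, they must consist of two pairs sharing respectively a common $A$-component and a common $C$-component (resp.\ $D$-component). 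Granting these, suppose $\ket{x,y}\in\bbC^4\otimes\bbC^4$ were orthogonal to $\cF_6$ across $AC:BD$. Then $\ket{x}$ is orthogonal to at least four of the left vectors and $\ket{y}$ to at least four of the right vectors (since eight orthogonality conditions split between two four-dimensional spaces). The span-five-implies-$\bbC^4$ property forces $\ket{x}$ to annihilate exactly a $3$-dimensional coordinate subspace spanned by four of the left vectors, and similarly for $\ket{y}$; the pairing structure then pins down which indices these four-element subsets can be, and one checks that no index $i\in\{1,\dots,8\}$ is simultaneously covered by a valid left-subset and a valid right-subset — a contradiction, since $\ket{x,y}$ must be orthogonal to the $i$-th product vector for \emph{every} $i$. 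Hence no such $\ket{x,y}$ exists, and $\cF_6$ is a UPB across $AC:BD$; the $AD:BC$ case is identical with $C$ and $D$ interchanged.

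The step I expect to be the main obstacle is establishing the Lemma~\ref{le:fj=property}-type span and pairing properties for the \emph{reshuffled} columns of $F_6$. For the original $AB:CD$ grouping these were verified ``directly or by programming'' on the specific entries of $F_6$; the $AC$- and $AD$-groupings mix an $A$- or $B$-component with a $C$- or $D$-component, so the linear-dependence pattern among the eight two-qubit vectors is genuinely different and must be recomputed. Concretely one has to show that no four of the eight reshuffled two-qubit vectors can be coplanar in a way that is \emph{not} of the forced two-pair form — this is where the inequality constraints on the entries of $F_6$ (the genericity conditions making $F_6$ a valid UOM) are used, and it is the place where a small rank computation, or a finite case check over which index-quadruples could possibly collapse, is unavoidable. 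Once that bookkeeping is in place the contradiction argument above is routine. I would present the span/pairing facts as a short sublemma (proved by the same ``verified directly or by programming'' device as Lemma~\ref{le:fj=property}(iii),(iv)) and then give the two-line contradiction argument for each of the two remaining partitions.
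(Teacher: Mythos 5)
Your proposal is correct and follows essentially the same route as the paper: reduce to the two remaining $4\times4$ cuts ($AC:BD$ and $AD:BC$), split the eight orthogonality conditions of a putative orthogonal product vector between the two $\bbC^4$ factors, and derive a contradiction from span properties of the reshuffled two-qubit vectors. The paper's own proof counts ``at least $3$ on each side plus at most one more, so at most $7<8$'' while citing Lemma~\ref{le:fj=property} for the $AC$ and $BD$ groupings without re-deriving it; you correctly flag that these span/pairing facts must be recomputed for the reshuffled columns (the one step the paper glosses over), and your exactly-four-plus-four quadruple check is the tighter way to close the same argument.
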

\begin{proof}
The bipartite partition of four-qubit system has only two cases, namely $2\times 8$ and $4\times 4$. Here we investigate only system $4\times 4$ since there does not exist bipartite UPB of $2\times n$ systems. Further, the $4\times 4$ partition for $\cF_6$ occurs in the system $AB:CD$, $AC:BD$ and $AD:BC$. We have proved $\cF_6$ in $\cH_{AB}\ox\cH_{CD}$ is a UOM in section \ref{sec:resultsummary}. The remaining task is to prove that $\cF_6$ is a UPB in $\cH_{AC}\ox\cH_{BD}$ and $\cH_{AD}\ox\cH_{BC}$. We show by contradiction that $\cF_6$ is a UPB in $\cH_{AC}\ox\cH_{BD}$, and one can similarly prove the other case.

Assume that $\ket{v}=\ket{\a,\b}\in\cH_{AC}\ox\cH_{BD}$ is orthogonal to the states $\ket{v_1},...,\ket{v_7}$, and $\ket{v_8}$ in $\cF_6$, see Fig. \ref{fig:digit1}. First, there must be three row vectors of $\cF_6$ orthogonal to $\ket{v}$ on system $AC$. Then we denote them by $\ket{v_1}, \ket{v_2}, \ket{v_3}$. Similarly, there must be three row vectors $\ket{v_4}, \ket{v_5}, \ket{v_6}$ orthogonal to $\ket{v}$ on system BD. Next, the space spanned by any four of $\ket{v_1}, \ket{v_2}, \cdots, \ket{v_8}$ of system $AC$ or $BD$ has dimension three or four from Lemma \ref{le:fj=property}. That is to say, there is at most one of the remaining two vectors of $\cF_6$ orthogonal to $\ket{v}$. If there is one among the two row vectors orthogonal to $\ket{v}$ then we denote it by $\ket{v_7}$. Further, $\ket{v_8}$ is not orthogonal to $\ket{v}$. It is a contradiction with the assumption the beginning of this paragraph. 
\end{proof}

\begin{figure}[htbp]
\centering
\includegraphics[width=0.48\textwidth]{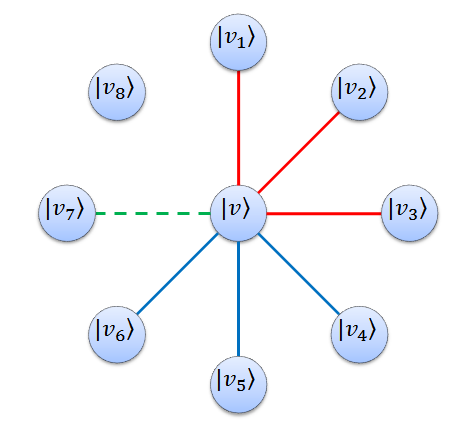}
\caption{The red line between the two product states $\ket{v}$ and $\ket{v_i}$ means that they are orthogonal on system $AC$. The blue line implies that the product vectors are orthogonal on system $BD$. The dashed line means that $\ket{v}$ may be orthogonal to $\ket{v_7}$. There is no line between $\ket{v}$ and $\ket{v_8}$, and it means that $\ket{v}$ and $\ket{v_8}$ are not orthogonal.}
\label{fig:digit1}
\end{figure}

In the remaining of this section, we investigate the properties of UPBs orthogonal to a GE space, if such UPBs exist. For simplicity we  
shall refer to such UPBs as \textit{GEUPB}. We present the following observation. 
\begin{lemma}
\label{le:geupb}
(i) The multipartite UPB is a GEUPB if and only if it is a bipartite UPB across any patition of systems.

(ii) The tensor product of two $m$-partite UPBs is still an $m$-partite UPB, and vice versa. 

(iii) The tensor product of two $m$-partite GEUPBs is still an $m$-partite GEUPB, and vice versa. 

That is, suppose
\begin{eqnarray}
&&
S=\{\ket{a_{i1},...,a_{im}}\},
\\\notag\\&&
T=\{\ket{b_{j1},...,b_{jm}}\},	
\end{eqnarray}
are two $m$-partite GEUPBs in the space $\otimes^m_{i=1}\cH_i$ and $\otimes^m_{j=1}\cK_j$, respectively. Then the set
\begin{eqnarray}
\label{eq:aibi}
\{
\ket{a_{i1},b_{j1}}
\otimes
...
\otimes
\ket{a_{im},b_{jm}}
\}	
\end{eqnarray}
is an $m$-partite GEUPB of $\abs{S}\cdot \abs{T}$ product vectors in the space $\otimes^m_{i=1}(\cH_i\otimes\cK_i)$. 	
\end{lemma}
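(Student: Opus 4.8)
The plan is to prove the three assertions in order, since (ii) is the workhorse for (iii), and (i) is the dictionary that lets us pass between the GEUPB language and the bipartite-UPB language. For (i), I would argue as follows. By definition, a multipartite UPB $\{\ket{a_{i1},\ldots,a_{in}}\}$ in $\cH_1\otimes\cdots\otimes\cH_n$ is a GEUPB precisely when its orthogonal complement contains no product vector with respect to \emph{any} bipartition. Fix a bipartition $P:\bar P$ of $\{1,\ldots,n\}$. Grouping the tensor factors according to $P$ and $\bar P$ turns each $\ket{a_{i1},\ldots,a_{in}}$ into a bipartite product vector $\ket{\alpha_i}\otimes\ket{\beta_i}$ in $(\bigotimes_{k\in P}\cH_k)\otimes(\bigotimes_{k\in\bar P}\cH_k)$, and a bipartite product vector orthogonal to all of them is exactly a $P:\bar P$-product vector in the orthogonal complement. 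Hence ``the orthogonal complement has no $P:\bar P$-product vector'' is the same as ``the regrouped set is a bipartite UPB across $P:\bar P$''. Quantifying over all bipartitions $P$ gives the equivalence. (One should note the set of regrouped vectors is still orthonormal, which is automatic, and that $n$-partite orthogonality is preserved; both are immediate.)

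For (ii) I would use the same regrouping trick combined with the elementary fact that a set of product vectors across a bipartition $P:\bar P$ is unextendible if and only if no product vector is orthogonal to it. Given $m$-partite UPBs $S$ and $T$ in $\otimes_{i}\cH_i$ and $\otimes_{j}\cK_j$, form the candidate set \eqref{eq:aibi}, which lives in $\otimes_i(\cH_i\otimes\cK_i)$; orthonormality is clear. Suppose some $m$-partite product vector $\ket{c_1,\ldots,c_m}$ with $\ket{c_k}\in\cH_k\otimes\cK_k$ were orthogonal to every vector \eqref{eq:aibi}. The key step is to feed this vector against the bipartition that separates, within each local factor $\cH_k\otimes\cK_k$, the $S$-side from the $T$-side: regrouping all the $\cH$-factors together and all the $\cK$-factors together, \eqref{eq:aibi} becomes the bipartite product set $\{\ket{a_{i1},\ldots,a_{im}}\otimes\ket{b_{j1},\ldots,b_{jm}}\}$ across $(\otimes_i\cH_i):(\otimes_j\cK_j)$, and one checks by a standard argument (the same one used for tensor products of UPBs in the bipartite literature, e.g. expanding $\ket{c_1,\ldots,c_m}$ in a Schmidt-type decomposition across this cut) that orthogonality of the regrouped vector to every element of $S\otimes T$ forces either a product vector orthogonal to all of $S$ or one orthogonal to all of $T$, contradicting that $S$ and $T$ are UPBs. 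The converse direction of (ii) --- if the product set \eqref{eq:aibi} is an $m$-partite UPB then so are $S$ and $T$ --- follows because any product vector orthogonal to, say, $S$ can be tensored with a fixed member of $T$ (factor by factor) to produce a product vector orthogonal to \eqref{eq:aibi}.

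For (iii), combine (i) and (ii): by (i), $S$ and $T$ are GEUPBs iff they are bipartite UPBs across every bipartition; applying (ii) --- in its bipartite special case, or equivalently the version proved by regrouping --- to each bipartition $P:\bar P$ shows that \eqref{eq:aibi}, regrouped according to $P$, is a bipartite UPB; since this holds for all $P$, assertion (i) applied to \eqref{eq:aibi} gives that \eqref{eq:aibi} is a GEUPB. The ``vice versa'' is the converse direction of (ii) transported through (i).

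The main obstacle is the heart of (ii): proving that no $m$-partite product vector is orthogonal to \eqref{eq:aibi}. The bookkeeping of which bipartition to test and the passage from ``orthogonal to a tensor-product set'' to ``orthogonal to one of the factor sets'' requires care --- it is exactly the point where one uses that a product vector orthogonal to $S\otimes T$ across the $\cH$-vs-$\cK$ cut must, by unfolding a minimal-length decomposition of $\ket{c_1,\ldots,c_m}$ in that cut, project onto a vector orthogonal to $S$ or onto a vector orthogonal to $T$, and each of these must again be a product vector because the original was. Everything else (orthonormality, the converse implications, the reduction of (iii) to (i) and (ii)) is routine.
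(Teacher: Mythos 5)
Your overall architecture is exactly the paper's: prove (i) from the definition of a GE space by regrouping, treat (ii) as the workhorse, and obtain (iii) by applying (i) and (ii) to each bipartition $U:\bar U$, regrouped as $(U_S,U_T):(\bar U_S,\bar U_T)$. Parts (i), the converse directions, and the reduction of (iii) to (i)+(ii) are all correct as you present them. The one place where the paper does no work --- it simply cites \cite[Theorem 8]{dms03} for the forward direction of (ii) --- is precisely the place where your sketch has a genuine gap. You propose to expand $\ket{c_1,\ldots,c_m}$ in a Schmidt-type decomposition across the global $\cH$-versus-$\cK$ cut and conclude that orthogonality to all of $S\otimes T$ yields a product vector orthogonal to $S$ or to $T$. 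Taking local Schmidt decompositions $\ket{c_k}=\sum_{r}\lambda^k_r\ket{u^k_r}\ket{v^k_r}$ does give Schmidt vectors $\ket{U_{\vec r}}=\otimes_k\ket{u^k_{r_k}}$, $\ket{V_{\vec r}}=\otimes_k\ket{v^k_{r_k}}$ that are product vectors, but the orthogonality conditions then read $\sum_{\vec r}\lambda_{\vec r}\braket{a_i}{U_{\vec r}}\braket{b_j}{V_{\vec r}}=0$ for all $i,j$, i.e.\ a matrix identity $A\Lambda B^{T}=0$. From this you can only conclude that certain \emph{linear combinations} of the $\ket{U_{\vec r}}$ are orthogonal to all of $S$ (namely those with coefficient vectors in $\ker A$), and such combinations are in general not product vectors, so no contradiction with $S$ being a UPB is reached. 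The argument only closes in the degenerate case $\ker A=0$.

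The step can be repaired, but by a party-local argument rather than a global Schmidt cut (this is essentially the proof of \cite[Theorem 8]{dms03}). Fix $\ket{a_i}=\ket{a_{i1},\ldots,a_{im}}\in S$ and form the contractions $\ket{\phi^i_k}=(\bra{a_{ik}}\otimes I_{\cK_k})\ket{c_k}\in\cK_k$. If all $\ket{\phi^i_k}\neq 0$, then $\ket{\phi^i_1,\ldots,\phi^i_m}$ is a product vector orthogonal to every element of $T$ (its overlap with $\ket{b_{j1},\ldots,b_{jm}}$ is $\prod_k\braket{c_k}{a_{ik},b_{jk}}=0$ up to conjugation), contradicting that $T$ is a UPB. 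Hence for every $i$ there is a party $k(i)$ with $(\bra{a_{i,k(i)}}\otimes I)\ket{c_{k(i)}}=0$, which says $\ket{a_{i,k(i)}}$ is orthogonal to the range of the reduced state of $\ket{c_{k(i)}}$ on $\cH_{k(i)}$. Choosing any nonzero $\ket{\chi_k}$ in the range of the reduced state of $\ket{c_k}$ on $\cH_k$ for each $k$ then gives a product vector $\ket{\chi_1,\ldots,\chi_m}$ orthogonal to every $\ket{a_i}$, contradicting that $S$ is a UPB. With this replacement your proof is complete and coincides in structure with the paper's, the only difference being that you prove the tensor-product theorem rather than cite it.
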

\begin{proof}
(i) The assertion follows from \cite[Remark 5]{PhysRevA.98.012313}.

(ii) The first part of assertion (ii) follows from \cite[Theorem 8]{dms03}. The "vice versa" part follows from the definition of UPBs.

(iii) The "vice versa" part follows from the definition of GEUPBs. We prove the first part of assertion (iii) as follows.

By the definition of GEUPBs, $S$ and $T$ are bipartite UPBs across any cut of systems, say $U:\bar{U}$. Assertion (ii) implies that the set in \eqref{eq:aibi} is  a bipartite UPB in the system $(U_S,U_T):(\bar{U}_S,\bar{U}_T)$. So assertion (iii) follows from assertion (i).
\end{proof}

Using the above lemmas, we present the second main result of this section. 
\begin{theorem}
\label{thm:geupb}
The multipartite GEUPB exists if and only if the $n\times n\times n$ GEUPB exists for some integer $n$. 
\end{theorem}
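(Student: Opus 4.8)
The plan is to prove the two implications separately. The ``if'' direction is immediate: an $n\times n\times n$ GEUPB is in particular a tripartite, hence multipartite, GEUPB.

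For the ``only if'' direction, suppose $S$ is an $m$-partite GEUPB in $\bigotimes_{i=1}^m\cH_i$ with $m\ge 3$. The first step is to coarsen $S$ to a \emph{tripartite} GEUPB. Partition the index set $\{1,\dots,m\}$ into three nonempty blocks $I_A,I_B,I_C$, and set $\cH_A=\bigotimes_{i\in I_A}\cH_i$, and similarly $\cH_B,\cH_C$. Each of the three bipartitions of the tripartite system $\cH_A\ox\cH_B\ox\cH_C$ corresponds to a bipartition of $\{1,\dots,m\}$, across which $S$ is a bipartite UPB by Lemma \ref{le:geupb}(i); hence $S$, regarded as a set of vectors in $\cH_A\ox\cH_B\ox\cH_C$, is a bipartite UPB across each of its three cuts, so Lemma \ref{le:geupb}(i) shows that $S$ is a tripartite GEUPB. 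We may therefore assume $m=3$ and that $S$ lives in $\bbC^{d_1}\ox\bbC^{d_2}\ox\bbC^{d_3}$.

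The second step balances the local dimensions. Permuting the three systems of a UPB gives a UPB by Lemma \ref{le:equiv}(i), and since the family of bipartitions of a system is invariant under relabeling of its parties, Lemma \ref{le:geupb}(i) also shows that a system-permuted GEUPB is again a GEUPB. Let $S^{(1)}=S$, and let $S^{(2)}$ and $S^{(3)}$ be the two cyclic relabelings of $S$, so that $S^{(1)},S^{(2)},S^{(3)}$ are tripartite GEUPBs with local-dimension triples $(d_1,d_2,d_3)$, $(d_2,d_3,d_1)$ and $(d_3,d_1,d_2)$ respectively. Applying Lemma \ref{le:geupb}(iii) twice, the party-by-party tensor product $S^{(1)}\ox S^{(2)}\ox S^{(3)}$ is a tripartite GEUPB; its $k$-th party has dimension $d_1d_2d_3$ for each $k=1,2,3$, so it is an $n\times n\times n$ GEUPB with $n=d_1d_2d_3$. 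This finishes the proof.

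The argument has no genuinely hard step; the point requiring the most care is the coarsening in Step 1 --- namely, that merging several systems into one block cannot create a fully product vector orthogonal to $S$. This is exactly what the characterization in Lemma \ref{le:geupb}(i) (a UPB is a GEUPB iff it is a bipartite UPB across every cut) delivers, since the cuts of the coarsened tripartite system form a subfamily of the cuts of the original $m$-partite system. One should also keep the harmless convention that each $\cH_i$ has dimension at least $2$: a trivial tensor factor would merely lower the effective value of $m$, and the standing hypothesis is that the given GEUPB is genuinely $m$-partite with $m\ge 3$.
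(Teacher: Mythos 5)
Your proof is correct and takes essentially the same route as the paper: coarse-grain the multipartite GEUPB to a tripartite one, then tensor the three cyclic relabelings via Lemma \ref{le:geupb}(iii) to equalize the local dimensions to $n=d_1d_2d_3$. The only difference is that you spell out (via Lemma \ref{le:geupb}(i)) why coarse-graining preserves the GEUPB property, a step the paper leaves implicit, and you label the nontrivial direction correctly as the ``only if'' part.
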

\begin{proof}
(i) It suffices to prove the "if" part. It follows from the definition of GE spaces that the tripartite GEUPB $S_{ABC}\subseteq\cH_A\otimes\cH_B\otimes\cH_C$ exists. So $S_{BCA}$ and $S_{CAB}$ are both tripartite GEUPBs. It follows from Lemma \ref{le:geupb} (iii) that the tensor product of $S_{ABC}$, $S_{BCA}$ and $S_{CAB}$ is also a tripartite GEUPB. Since its local systems have equal dimensions, the assertion holds. 
\end{proof}

To further investigate the problem in \cite{PhysRevA.98.012313}, the theorem shows that it suffices to find the $n\times n\times n$ GEUPB or prove its nonexistence. It is known that the 3-qubit UPB has cardinality four \cite{Bravyi2004Unextendible}, just like those in \eqref{eq:3qubit}. So the 3-qubit UPB is orthogonal to a bipartite product vector in the system $A:BC$. Thus the 3-qubit UPB is not a GEUPB, and we have $n>2$ in terms of Theorem \ref{thm:geupb}. In particular a three-qutrit UPB of size seven has been constructed \cite{dms03}, though it is evidently not a GEUPB. We need construct three-qutrit UPBs of larger size, say close to the upper bound $23$, because any multipartite PPT states of rank at most three are separable states \cite{Lin2013Separability}. 
 
\section{Conclusions}
\label{sec:con}

We have shown that there are six inequivalent $4\times4$ UPB of size eight, when we only consider 4-qubit product vectors. We have constructed entangled states that are at the same 4-qubit, $2\times2\times4$ and $4\times4$ entangled states of rank nine. The results have been obtained using the UOM. We further have shown that the UPB $\cF_6$ is orthogonal to a space containing no $4\times4$ product vector w.r.t any bipartition of 4-qubit system. We also have shown that the multipartite UPB orthogonal to a GE space exists if and only if the $n\times n\times n$ UPB orthogonal to a GE space for some integer $n$. In spite of these results, we are still unable to show whether the UPB orthogonal to a GE space exists. Another problem is to characterize PPT entangled states of rank eight or nine that are not generated by orthogonal 4-qubit product vectors.

\section*{Acknowledgments}

This work was supported by the NNSF of China (Grant No. 11871089), and the Fundamental Research Funds for the Central Universities (Grant Nos. KG12040501, ZG216S1810 and ZG226S18C1).  

\appendix

\section{The description of six four-qubit UOMs $F_1,F_2,...,F_6$}
\label{app:uom,f1-f6}

\begin{widetext}
\begin{eqnarray}
\label{eq:UOM1-summary}
&&
F_1=
\bma
0&0&0&0\\
0&1&0&1\\
1 & g_3\ne0,1 & h_3\ne0,1 &i_3\ne0,1,i_4\\
1 & g_3' & h_3 &i_4\ne0,1\\
f_5\ne0,1 & g_3' & 1 &i_4'\\
f_5 & g_3 & 1 &i_3'\\
f_5' & 0 & h_3' &1\\
f_5' & 1 & h_3' &0\\
\ema,
\\&&
\label{eq:f2,i3=i4'}
F_1(i_3=i_4')=
\bma
0&0&0&0\\
0&1&0&1\\
1 & g_3\ne0,1 & h_3\ne0,1 &i_3\ne0,1\\
1 & g_3' & h_3 &i_3'\\
f_5\ne0,1 & g_3' & 1 &i_3\\
f_5 & g_3 & 1 &i_3'\\
f_5' & 0 & h_3' &1\\
f_5' & 1 & h_3' &0\\
\ema,
\\&&
F_2=
\bma
0    & 0    & 0    & 0\\
0    & 1    & 0    & i_2\neq 0,1 \\
1    & g_3\ne0,1  & h_3\ne0,1  & i_3\ne i_4,i_4'\\
1    & g_3' & h_3  & i_4\\
f_5\ne0,1  & g_3' & 1    & i_4'\\
f_5  & g_3  & 1    & i_3'\\
f_5' & 1    & h_3' & i_2'\\
f_5' & 0    & h_3' & 1\\
\ema,
\\&&
(i) F_2(i_2=i_3, i_4\neq 0,1)=
\bma
0    & 0    & 0    & 0\\
0    & 1    & 0    & i_2\ne0,1,i_4,i_4'\\
1    & g_3\ne0,1  & h_3\ne0,1  & i_2\\
1    & g_3' & h_3  & i_4\\
f_5\ne0,1  & g_3' & 1    & i_4'\\
f_5  & g_3  & 1    & i_2'\\
f_5' & 1    & h_3' & i_2'\\
f_5' & 0    & h_3' & 1\\
\ema,
\\&&
F_2(i_2=i_3, i_4=0)=
\bma
0    & 0    & 0    & 0\\
0    & 1    & 0    & i_2\ne0,1\\
1    & g_3\ne0,1  & h_3\ne0,1  & i_2\\
1    & g_3' & h_3  & 0\\
f_5\ne0,1  & g_3' & 1    & 1\\
f_5  & g_3  & 1    & i_2'\\
f_5' & 1    & h_3' & i_2'\\
f_5' & 0    & h_3' & 1\\
\ema,
\\&&
F_2(i_2=i_3, i_4=1)=
\bma
0    & 0    & 0    & 0\\
0    & 1    & 0    & i_2\ne0,1\\
1    & g_3\ne0,1  & h_3\ne0,1  & i_2\\
1    & g_3' & h_3  & 1\\
f_5\ne0,1  & g_3' & 1    &0\\
f_5  & g_3  & 1    & i_2'\\
f_5' & 1    & h_3' & i_2'\\
f_5' & 0    & h_3' & 1\\
\ema,
\end{eqnarray}
\begin{eqnarray}
&&
(ii) F_2(i_2=i_3', i_4\neq 0,1)=
\bma
0    & 0    & 0    & 0\\
0    & 1    & 0    & i_2\neq 0,1,i_4,i_4' \\
1    & g_3\ne0,1  & h_3\neq0,1  & i_2'\\
1    & g_3' & h_3  & i_4\\
f_5\ne0,1  & g_3' & 1    & i_4'\\
f_5  & g_3  & 1    & i_2\\
f_5' & 1    & h_3' & i_2'\\
f_5' & 0    & h_3' & 1\\
\ema,
\\&&
F_2(i_2=i_3', i_4=0)=
\bma
0    & 0    & 0    & 0\\
0    & 1    & 0    & i_2\neq 0,1\\
1    & g_3\ne0,1  & h_3\neq0,1  & i_2'\\
1    & g_3' & h_3  & 0\\
f_5\ne0,1  & g_3' & 1    & 1\\
f_5  & g_3  & 1    & i_2\\
f_5' & 1    & h_3' & i_2'\\
f_5' & 0    & h_3' & 1\\
\ema,
\\&&
F_2(i_2=i_3', i_4=1)=
\bma
0    & 0    & 0    & 0\\
0    & 1    & 0    & i_2\neq 0,1 \\
1    & g_3\ne0,1  & h_3\neq0,1  & i_2'\\
1    & g_3' & h_3  & 1\\
f_5\ne0,1  & g_3' & 1    & 0\\
f_5  & g_3  & 1    & i_2\\
f_5' & 1    & h_3' & i_2'\\
f_5' & 0    & h_3' & 1\\
\ema,
\\&&
(iii) F_2(i_2=i_4, i_3\neq 0,1)=
\bma
0    & 0    & 0    & 0\\
0    & 1    & 0    & i_2\neq 0,1,i_3,i_3' \\
1    & g_3\ne0,1  & h_3\ne0,1  & i_3\\
1    & g_3' & h_3  & i_2\\
f_5\ne0,1  & g_3' & 1    & i_2'\\
f_5  & g_3  & 1    & i_3'\\
f_5' & 1    & h_3' & i_2'\\
f_5' & 0    & h_3' & 1\\
\ema,
\\&&
F_2(i_2=i_4, i_3=0)=
\bma
0    & 0    & 0    & 0\\
0    & 1    & 0    & i_2\neq 0,1\\
1    & g_3\ne0,1  & h_3\ne0,1  & 0\\
1    & g_3' & h_3  & i_2\\
f_5\ne0,1  & g_3' & 1    & i_2'\\
f_5  & g_3  & 1    & 1\\
f_5' & 1    & h_3' & i_2'\\
f_5' & 0    & h_3' & 1\\
\ema,
\\&&
F_2(i_2=i_4, i_3=1)=
\bma
0    & 0    & 0    & 0\\
0    & 1    & 0    & i_2\neq 0,1\\
1    & g_3\ne0,1  & h_3\ne0,1  & 1\\
1    & g_3' & h_3  & i_2\\
f_5\ne0,1  & g_3' & 1    & i_2'\\
f_5  & g_3  & 1    & 0\\
f_5' & 1    & h_3' & i_2'\\
f_5' & 0    & h_3' & 1\\
\ema,
\\&&
(iv) F_2(i_2=i_4', i_3\neq 0,1)=
\bma
0    & 0    & 0    & 0\\
0    & 1    & 0    & i_2\neq 0,1,i_3,i_3' \\
1    & g_3\ne0,1  & h_3\ne0,1  & i_3\\
1    & g_3' & h_3  & i_2'\\
f_5\ne0,1  & g_3' & 1    & i_2\\
f_5  & g_3  & 1    & i_3'\\
f_5' & 1    & h_3' & i_2'\\
f_5' & 0    & h_3' & 1\\
\ema,
\end{eqnarray}
\begin{eqnarray}
&&
F_2(i_2=i_4', i_3=0)=
\bma
0    & 0    & 0    & 0\\
0    & 1    & 0    & i_2\neq 0,1 \\
1    & g_3\ne0,1  & h_3\ne0,1  & 0\\
1    & g_3' & h_3  & i_2'\\
f_5\ne0,1  & g_3' & 1    & i_2\\
f_5  & g_3  & 1    & 1\\
f_5' & 1    & h_3' & i_2'\\
f_5' & 0    & h_3' & 1\\
\ema,
\\&&
F_2(i_2=i_4', i_3=1)=
\bma
0    & 0    & 0    & 0\\
0    & 1    & 0    & i_2\neq 0,1\\
1    & g_3\ne0,1  & h_3\ne0,1  & 1\\
1    & g_3' & h_3  & i_2'\\
f_5\ne0,1  & g_3' & 1    & i_2\\
f_5  & g_3  & 1    & 0\\
f_5' & 1    & h_3' & i_2'\\
f_5' & 0    & h_3' & 1\\
\ema,
\\&&
F_3=
\bma
0    & 0    & 0           & 0\\
0    & 1    & 0           & i_2\neq 0,1 \\
1    & g_3\ne0,1  & h_3\ne0,1,h_4        & 0\\
1    & g_3' & h_4\ne0,1         & i_2\\
f_5\ne0,1  & g_3' & 1           & i_2'\\
f_5  & g_3  & 1           & 1\\
f_5' & 1    & h_3'        & i_2'\\
f_5' & 0    & h_4'        & 1\\
\ema,
\\&&
F_4=
\bma
0    & 0                     & 0           & 0\\
0    & 1                     & 0           & i_2\neq 0,1,i_3,i_3' \\
1    & g_3 \neq 0,1,g_4,g_4'                  & h_3\ne0,1         & i_3\ne0,1 \\
1    & g_4\neq 0,1                  & h_3         & i_3'\\
f_5\ne0,1  & g_3'                  & 1           & i_3\\
f_5  & g_4'                  & 1           & i_3'\\
f_5' & 1                     & h_3'        & i_2'\\
f_5' & 0                     & h_3'        & 1\\
\ema,
\\&&
F_5=
\bma
0               & 0           & 0           & 0\\
0               & 1           & 0           & i_2 \neq 0,1\\
1               & g_3\neq 0,1         & h_3\neq 0,1         & 1\\
1               & g_3'        & h_3         & i_2'\\
f_5\neq 0,1,f_6          & g_3         & 1           & 0\\
f_6 \neq 0,1            & g_3'        & 1           & i_2\\
f_5'            & 1           & h_3'        & i_2'\\
f_6'            & 0           & h_3'        & 1\\
\ema,
\\&& F_6=
\bma
0&0&0&0\\
0&1&h_2\neq 0,1&i_2\neq 0,i_3',i_4\\
1&g_3\neq 0,1&0&i_3\neq 0,i_4\\
1&g_3'&h_2&i_4\neq 0,1\\
f_5\ne0,1 &1&h_2'&i_3'\\
f_5&0&1&i_4'\\
f_5'&g_3&1&i_2'\\
f_5'&g_3'&h_2'&1\\
\ema,
\\&&
(i)\ F_6(i_2=1)=
\bma
0&0&0&0\\
0&1&h_2\neq 0,1&1\\
1&g_3\neq 0,1&0&i_3\neq 0,1,i_4\\
1&g_3'&h_2&i_4\neq 0,1\\
f_5\ne0,1 &1&h_2'&i_3'\\
f_5&0&1&i_4'\\
f_5'&g_3&1&0\\
f_5'&g_3'&h_2'&1\\
\ema,
\end{eqnarray}
\begin{eqnarray}
&&
F_6(i_2=1,i_3=i_4')=
\bma
0&0&0&0\\
0&1&h_2\neq 0,1&1\\
1&g_3\neq 0,1&0&i_3\\
1&g_3'&h_2&i_3'\neq 0,1\\
f_5\ne0,1 &1&h_2'&i_3'\\
f_5&0&1&i_3\\
f_5'&g_3&1&0\\
f_5'&g_3'&h_2'&1\\
\ema,
\\&& 
(ii)\ F_6(i_2=i_3)=
\bma
0&0&0&0\\
0&1&h_2\neq 0,1&i_2\ne0,1,i_4,i_4'\\
1&g_3\neq 0,1&0&i_2\\
1&g_3'&h_2&i_4\neq 0,1\\
f_5\ne0,1 &1&h_2'&i_2'\\
f_5&0&1&i_4'\\
f_5'&g_3&1&i_2'\\
f_5'&g_3'&h_2'&1\\
\ema,
\\&&
(iii)F_6(i_2=i_3')=
\bma
0&0&0&0\\
0&1&h_2\neq 0,1&i_2\neq 0,1,i_4,i_4'\\
1&g_3\neq 0,1&0&i_2'\\
1&g_3'&h_2&i_4\neq 0\\
f_5\ne0,1 &1&h_2'&i_2\\
f_5&0&1&i_4'\\
f_5'&g_3&1&i_2'\\
f_5'&g_3'&h_2'&1\\
\ema,
\\&&
F_6(i_2=i_3',i_4=1)=
\bma
0&0&0&0\\
0&1&h_2\neq 0,1&i_2\neq 0,1\\
1&g_3\neq 0,1&0&i_2'\\
1&g_3'&h_2&1\\
f_5\ne0,1 &1&h_2'&i_2\\
f_5&0&1&0\\
f_5'&g_3&1&i_2'\\
f_5'&g_3'&h_2'&1\\
\ema,
\\\label{eq:UOM7i2=i4'-summary}
&&(iv)\
F_6(i_2=i_4')=
\bma
0&0&0&0\\
0&1&h_2\neq 0,1&i_2\ne0,1,i_3,i_3'\\
1&g_3\neq 0,1&0&i_3\neq 0\\
1&g_3'&h_2&i_2'\\
f_5\ne0,1 &1&h_2'&i_3'\\
f_5&0&1&i_2\\
f_5'&g_3&1&i_2'\\
f_5'&g_3'&h_2'&1\\
\ema.
\end{eqnarray}
\end{widetext}

\section{The proof of Lemma \ref{le:t11=3} }
\label{app:proof}

In this section we prove Lemma \ref{le:t11=3}.

(i) Using $F_1$ in Appendix \ref{app:uom,f1-f6} we have
\begin{eqnarray}
\label{eq:f1-s1}
S_{11}=
\bma
0&1&0&1\\
1 & g_3\ne0,1 & h_3\ne0,1 &i_3\ne0,1,i_4\\
1 & g_3' & h_3 &i_4\ne0,1\\
f_5\ne0,1 & g_3' & 1 &i_4'\\
f_5 & g_3 & 1 &i_3'\\
f_5' & 0 & h_3' &1\\
f_5' & 1 & h_3' &0\\
\ema.
\end{eqnarray}
Let the $4\times4$ product state $\ket{x,y}\in\cT_{11}$. Lemma \ref{le:fj=property} (i) and (ii) imply that we have two cases. First,
$\ket{x}$ is orthogonal to four product vectors in $\{\ket{a_{1i},b_{1i}},i=2,...,8\}$, and 
$\ket{y}$ is orthogonal to three product vectors in $\{\ket{c_{1i},d_{1i}},i=2,...,8\}$. Second,
$\ket{x}$ is orthogonal to three product vectors in $\{\ket{a_{1i},b_{1i}},i=2,...,8\}$, and 
$\ket{y}$ is orthogonal to four product vectors in $\{\ket{c_{1i},d_{1i}},i=2,...,8\}$.

In the first case using Lemma \ref{le:fj=property} (iii) we obtain
\begin{eqnarray}
\label{eq:beta1}
\ket{0,0,0,0},
\ket{f_5',0,h_3',0},
\ket{f_5,g_3,\b_3},
\ket{f_5,g_3',\b_4}
\in\cT_{21},
\end{eqnarray}
where $\b_3,\b_4$ are two-qubit states in $\bbH_C\otimes\bbH_D$.
In the second case, we still use Lemma \ref{le:fj=property} (iv) and exclude the same states in \eqref{eq:beta1}. If $i_3=i_4'$ then we obtain
\begin{eqnarray}
\ket{\b_5,h_3,i_3},
\ket{\b_6,h_3,i_3'}
\in\cT_{21}.
\end{eqnarray}
Hence $\abs{\cT_{11}}=4$ or $6$. The latter holds if and only if $i_3=i_4'$.

(ii) Using $F_2(i_2=i_3, i_4=0)$ we have
\begin{eqnarray}
&&
S_{21}(i_2=i_3, i_4=0)
\notag\\=&&
\bma
0    & 1    & 0    & i_3\ne0,1\\
1    & g_3\ne0,1  & h_3\ne0,1  & i_3\\
1    & g_3' & h_3  & 0\\
f_5\ne0,1  & g_3' & 1    & 1\\
f_5  & g_3  & 1    & i_3'\\
f_5' & 1    & h_3' & i_3'\\
f_5' & 0    & h_3' & 1\\
\ema.
\notag
\end{eqnarray}
Similar to case (i), we obtain
$
\ket{0,0,0,0},
\ket{f_5',0,h_3',0},
\ket{f_5,g_3,0,i_3'},
\ket{f_5,g_3',\g_1},$
\\
$\ket{\g_2,h_3',i_3},
\ket{f_5',g_3,h_3,i_3'}
\in\cT_{21}(i_2=i_3, i_4=0).
$ Hence $\abs{\cT_{21}(i_2=i_3, i_4=0)}=6$.

(iii) Using $F_3$ we have
$
\ket{0,0,0,0},
\ket{f_5',0,\d_1},
\ket{f_5,g_3,\d_2},
\ket{f_5,g_3',0,i_2'}
\in\cT_{31}.
$
Further if $h_3=h_4'$ then
$
\ket{\d_3,h_3,0}
\in\cT_{31}.	
$
Hence $\abs{\cT_{31}}=4$ or $5$.

(iv) Using $F_4$ we have
$
\ket{0,0,0,0},
\ket{f_5',0,h_3',0},
\ket{\e_1,h_3,i_3},
\ket{\e_2,h_3,i_3'}
\in\cT_{41}.
$
Hence $\abs{\cT_{41}}=4$.

(v) Using $F_5$ we have
$
\ket{0,0,0,0},
\ket{\z_1,h_3',i_2'},
\ket{\z_2,0,i_2},
\ket{0,g_3',h_3,i_2'}
\in\cT_{51}.
$
If $f_5=f_6'$ then
$
\ket{f_5',g_3,\z_3},
\ket{f_5,g_3',\z_4}
\in\cT_{61}.
$
Hence $\abs{\cT_{51}}=4$ or $6$.

(vi) Using $F_6(i_2=i_3)$ we have
$
\ket{0,0,0,0},
\ket{f_5',g_3',h_2',0},
\ket{f_5',g_3,0,i_2'},
\ket{f_5,0,\eta_1},
$\\
$
\ket{\eta_2,h_2',i_2},
\ket{\eta_3,h_2,i_2'}
\in\cT_{61}.
$
Hence $\abs{\cT_{61}}=6$.

\section{The construction of six four-qubit UOMs $F_1,F_2,...,F_6$ (to be shortened greatly)}

\label{app:construct f1-f6}

We introduce a simple fact from \cite[Lemma 2]{Joh13}. It will be used in the proof of Lemma \ref{le:independent}.
\begin{lemma}
\label{le:xx'}
(i) If $\cS \subseteq (\bbC^2)^{\ox n}$ is a UPB, then for all $\ket{v}\in\cS$ and all integers $1\le j\le n$ there is another product vector $\ket{w}\in\cS$ such that $\ket{v}$ and $\ket{w}$ are orthogonal on the $j$-th subsystem or $j$-th qubit.	

(ii) The number of distinct vectors of any qubit in a UPB is an even integer.
\end{lemma}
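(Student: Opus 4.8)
The plan is to establish (i) by a direct contradiction that exploits the unextendibility of $\cS$, and then to obtain (ii) as a short corollary of (i) via a fixed-point-free involution on the set of distinct qubit states. No programming or case analysis should be needed.

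For part (i) I would argue by contradiction. Suppose some member $\ket{v}=\ket{v_1}\ox\cdots\ox\ket{v_n}\in\cS$ and some index $j$ admit no other member of $\cS$ that is orthogonal to $\ket{v}$ on qubit $j$. Form the candidate product vector $\ket{\tilde v}$ by replacing the $j$-th factor $\ket{v_j}$ with the orthogonal qubit state $\ket{v_j^\perp}$ and leaving every other factor unchanged. I claim $\ket{\tilde v}$ is orthogonal to all of $\cS$. First, $\langle\tilde v|v\rangle=0$ because the $j$-th factor already contributes $\langle v_j^\perp|v_j\rangle=0$. Second, for any $\ket{w}\neq\ket{v}$ in $\cS$, orthogonality of the UPB gives $\prod_k\langle v_k|w_k\rangle=0$; by the standing hypothesis the $j$-th factor $\langle v_j|w_j\rangle$ is nonzero, so the product must vanish through some factor $k\neq j$, whence $\langle\tilde v|w\rangle=\langle v_j^\perp|w_j\rangle\prod_{k\neq j}\langle v_k|w_k\rangle=0$ as well. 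Thus $\ket{\tilde v}$ is a product vector orthogonal to every member of $\cS$, contradicting unextendibility; this proves (i).

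For part (ii) I would feed (i) into a counting argument. In $\bbC^2$ each ray $\ket{a}$ has a unique orthogonal ray $\ket{a^\perp}$, so the map $\phi\colon\ket{a}\mapsto\ket{a^\perp}$ is an involution with no fixed point on the set of rays. Let $D_j$ be the set of distinct qubit states appearing on subsystem $j$ across the members of $\cS$. Part (i) says that for every $\ket{a}\in D_j$ there is a member whose $j$-th factor is $\ket{a^\perp}$, i.e. $\phi$ maps $D_j$ into itself; since $\phi^2=\mathrm{id}$ and $\phi$ has no fixed point, $\phi$ partitions the finite set $D_j$ into two-element orbits, so $\abs{D_j}$ is even, which is the claim.

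The single nontrivial step is the orthogonality bookkeeping inside (i): recognizing that the UPB condition $\langle v|w\rangle=0$ together with the assumed non-orthogonality on qubit $j$ forces the cancellation onto a different qubit, so that flipping the $j$-th factor of $\ket{v}$ simultaneously creates orthogonality to $\ket{v}$ itself while preserving orthogonality to every other member. Everything after that, including the involution in (ii), is routine. The only care required is to phrase \emph{distinct} and \emph{orthogonal} in terms of rays (states up to a global phase), consistent with the vector-variable conventions used elsewhere in the paper.
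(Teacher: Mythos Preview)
Your proof is correct and is the standard argument; note that the paper does not actually supply its own proof here but merely cites \cite[Lemma 2]{Joh13}, whose argument for (i) is exactly the contradiction you wrote (flip the $j$-th qubit of $\ket{v}$ and observe that the resulting product vector is orthogonal to all of $\cS$). Your derivation of (ii) from (i) via the fixed-point-free involution $\ket{a}\mapsto\ket{a^\perp}$ on qubit rays is the natural way to extract that consequence and matches how the statement is used later in the paper.
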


The following result from \cite[Lemma 5]{Chen2018Multiqubit} will be used in the proof of Lemma \ref{le:44=a3b3}. 
\begin{lemma}
\label{le:UOM}
Let X=$[x_{i,j}]$$\in\cO(m,n)$ be a UOM, and $\m(x)$ the multiplicity of element $x$. 
If $p_j=\sum\mu(x)\mu(x')$, where the summation is over all pairs $\{x,x'\}$ in column j of X, then $\sum p_{j}\geq m(m-1)/2$.
\end{lemma}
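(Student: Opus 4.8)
The plan is to read the quantity $p_j$ as a count of orthogonal row-pairs in column $j$, and then to invoke the one genuine hypothesis — that $X$ is a UOM, so its rows are pairwise orthogonal — through a double-counting argument.

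First I would fix a column $j$ and make the combinatorial meaning of $p_j$ explicit. By the convention relating vector variables to qubit states, two entries $x_{i,j}$ and $x_{k,j}$ in column $j$ correspond to orthogonal qubit states precisely when $\{x_{i,j},x_{k,j}\}$ is a pair of the form $\{x,x'\}$. For a fixed such pair $\{x,x'\}$ occurring in column $j$, the rows carrying $x$ and the rows carrying $x'$ form two disjoint groups of sizes $\m(x)$ and $\m(x')$, so the number of unordered row-pairs $\{i,k\}$ with one entry equal to $x$ and the other to $x'$ is exactly $\m(x)\m(x')$. Distinct orthogonal pairs $\{x,x'\}$ contribute disjoint sets of row-pairs, since any row-pair orthogonal on qubit $j$ determines the unique variable-pair $\{x_{i,j},x_{k,j}\}$. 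Summing over all orthogonal pairs in column $j$, I would conclude
\[
p_j=\#\{\,\{i,k\}:\text{rows } i,k \text{ are orthogonal on qubit } j\,\}.
\]

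Next I would use that $X$ is a UOM. Its rows are pairwise orthogonal product vectors, and two product vectors are orthogonal iff they are orthogonal on at least one qubit. Hence for every pair of distinct rows $\{i,k\}$ the set $O(\{i,k\}):=\{\,j:\text{rows } i,k \text{ orthogonal on qubit } j\,\}$ is nonempty, i.e. $\abs{O(\{i,k\})}\ge1$. Summing the column identity over all $n$ columns and switching the order of summation gives
\[
\sum_{j=1}^{n}p_j=\sum_{\{i,k\}}\abs{O(\{i,k\})}\ge\sum_{\{i,k\}}1=\binom{m}{2}=\frac{m(m-1)}{2},
\]
where the sum runs over all $\binom{m}{2}$ unordered pairs of distinct rows. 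This is exactly the claimed bound.

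I do not expect a serious obstacle: the statement is an elementary double count of orthogonal row-pairs, and the only substantive input is the defining pairwise-orthogonality of a UOM. The sole step requiring care is the first, namely the bookkeeping that $\m(x)\m(x')$ counts the relevant row-pairs without over- or under-counting and that different variable-pairs in a column account for disjoint collections of row-pairs; once this is checked the inequality follows immediately, with equality precisely when every pair of rows is orthogonal on exactly one qubit.
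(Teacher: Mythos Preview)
Your argument is correct: interpreting $p_j$ as the number of row-pairs orthogonal on qubit $j$ and then double-counting over all $\binom{m}{2}$ row-pairs (each contributing to at least one column because the rows of a UOM are pairwise orthogonal product vectors) yields the bound immediately. The paper itself does not prove this lemma but simply imports it from \cite[Lemma~5]{Chen2018Multiqubit}, so there is no in-paper proof to compare against; your double-counting is exactly the standard justification behind that cited result.
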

We refer to the positive integer $p_j$ as the \textit{o-number} of column $j$ of $X$. It represents 
the number of all orthogonal pairs in column $j$ of $X$. We refer readers to \cite{Chen2018Multiqubit} for more details on UOMs.

We present a special partition of a positive integer into smaller positive integers. It will be also used in the proof of Lemma \ref{le:44=a3b3}. 
\begin{lemma}
\label{le:maxsum}
Suppose $p$ is the sum of $2n$ positive integers $a_1,a_2,...,a_{2n}$. Then the maximum of $a_1a_2+a_3a_4+...+a_{2n-1}a_{2n}$ is $\lc\frac{p-2n+2}{2}\rc\cdot
\lf\frac{p-2n+2}{2}\rf+n-1$. It is achievable if and only if up to the permutation of subscripts, we have
 $a_{1}=\lc\frac{p-2n+2}{2}\rc$, $a_{2}=\lf\frac{p-2n+2}{2}\rf$ and $a_i=1$ for $i>2$.
\end{lemma}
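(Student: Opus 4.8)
The plan is to first absorb the constraints $a_j\ge 1$ by the substitution $a_j = 1+x_j$ with $x_j$ a nonnegative integer, turning the affine constraint into a homogeneous one. Grouping the new variables by pairs, $(\beta_i,\gamma_i):=(x_{2i-1},x_{2i})$, one gets $\sum_{i=1}^n(\beta_i+\gamma_i)=p-2n=:q$ (with $q\ge 0$ automatic, since $p$ is a sum of $2n$ positive integers), and
\[
\sum_{i=1}^n a_{2i-1}a_{2i}=\sum_{i=1}^n(1+\beta_i)(1+\gamma_i)=n+q+\sum_{i=1}^n\beta_i\gamma_i .
\]
So it suffices to maximise $\Sigma:=\sum_i\beta_i\gamma_i$ over nonnegative integers with $\sum_i(\beta_i+\gamma_i)=q$.

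For the bound I would chain two elementary inequalities. By nonnegativity,
\[
\Sigma=\sum_i\beta_i\gamma_i\le\sum_{i,j}\beta_i\gamma_j=\Big(\sum_i\beta_i\Big)\Big(\sum_j\gamma_j\Big),
\]
and, setting $u=\sum_i\beta_i$ and $v=\sum_j\gamma_j$ with $u+v=q$, the integer AM--GM inequality gives $uv\le\lceil q/2\rceil\lfloor q/2\rfloor$. Hence $\Sigma\le\lceil q/2\rceil\lfloor q/2\rfloor$, and substituting $q=p-2n$ together with the identity $n+q+\lceil q/2\rceil\lfloor q/2\rfloor=\lceil(q+2)/2\rceil\lfloor(q+2)/2\rfloor+n-1$ (expand $\lceil(q+2)/2\rceil\lfloor(q+2)/2\rfloor=(\lceil q/2\rceil+1)(\lfloor q/2\rfloor+1)$) yields exactly the claimed maximum; the stated configuration attains it, so this settles the value.

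For the \emph{achievable if and only if} part I would trace the equality conditions through both steps. Equality in the first inequality forces $\beta_i\gamma_j=0$ for all $i\ne j$; when $q\ge 2$, equality in the second forces $u,v\ge 1$, and then the support condition forces $\{i:\beta_i>0\}$ and $\{j:\gamma_j>0\}$ to be one and the same single index $i_0$, whence $\{\beta_{i_0},\gamma_{i_0}\}=\{\lceil q/2\rceil,\lfloor q/2\rfloor\}$ and all other $\beta_i,\gamma_i$ vanish. Undoing the substitution, and observing that reordering the pairs and swapping the two entries of a pair are both index permutations, this is precisely $a_1=\lceil(p-2n+2)/2\rceil$, $a_2=\lfloor(p-2n+2)/2\rfloor$, $a_i=1$ for $i>2$; the forced degenerate cases $q=0$ and $q=1$ are checked directly and fit the same description. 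The main obstacle is exactly this equality bookkeeping: one must verify that when $q$ is odd the two balanced splits of the big pair are honest permutations of each other, and that the strict form of the support argument genuinely excludes every ``spread-out'' competitor, so that the optimiser is unique up to permutation of subscripts rather than merely up to the stated normal form.
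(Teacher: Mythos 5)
Your proof is correct, and it takes a genuinely different route from the paper. The paper argues by an exchange/perturbation scheme: it fixes the odd-indexed entries $a_1,a_3,\dots,a_{2n-1}$ (ordered decreasingly) together with all but two of the even-indexed ones, observes that pushing the remaining mass onto $a_2$ while setting the other free even entries to $1$ increases $N$, iterates, and then repeats the argument with the roles of odd and even indices reversed to conclude $a_3=\dots=a_{2n}=1$ and $a_1+a_2=p-2n+2$, after which the balanced split of $a_1+a_2$ is immediate. Your substitution $a_j=1+x_j$ followed by the chain $\sum_i\beta_i\gamma_i\le\bigl(\sum_i\beta_i\bigr)\bigl(\sum_j\gamma_j\bigr)\le\lceil q/2\rceil\lfloor q/2\rfloor$ replaces that iteration with a single global bound, and the identity $n+q+\lceil q/2\rceil\lfloor q/2\rfloor=\lceil(q+2)/2\rceil\lfloor(q+2)/2\rfloor+n-1$ checks out. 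What your approach buys is a clean and complete treatment of the equality case: the two equality conditions (vanishing of all cross terms $\beta_i\gamma_j$, $i\ne j$, plus the balanced split of $u+v=q$) pin down the optimizer up to pair-reordering and within-pair swaps, which is exactly the ``only if'' claim; the paper's greedy argument establishes the value of the maximum but does not carefully isolate the uniqueness of the maximizer (and its intermediate strict inequalities $a_1>a_3>\dots$ are not fully justified when entries coincide). Your explicit handling of the degenerate cases $q=0,1$ is also a point the paper glosses over.
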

\begin{proof}
Let $N=a_1a_2+a_3a_4+...+a_{2n-1}a_{2n}$. We fix the values of $a_1,a_3,..,a_{2n-1}$ and $a_6, a_8,..,a_{2n}$, and make $a_1>a_3>...>a_{2n-1}$ by renaming the subscripts. Then we have $a_2+a_4=p-a_1-a_3-a_5-a_6-...-a_{2n}$. Since $a_1>a_3$ is given, so $a_4=1$ and $a_2=p-a_1-a_3-a_5-a_6-...-a_{2n}-1$ come to $N$ greater. When free $a_2$, $a_4$ and $a_6$, using the argument same to freeing $a_2$ and $a_4$, $N$ is greater if $a_4=a_6=1$ and $a_2=p-a_1-a_3-a_5-a_7-a_8-...-a_{2n}-2$ is satisfied . Deducing the rest by this method, one obtain that $N$ reaches the maximum when we fix $a_1$, $a_3$,...,$a_{2n-1}$ and take $a_4=a_6=...=a_{2n}=1$, $a_2=p-a_1-a_3-...-a_{2n-1}-(n-1)$. Using the similar argument to fixing $a_1$, $a_3$,...,$a_{2n-1}$, one can show that $N$ reaches the maximum when we fix $a_2$, $a_4$,...,$a_{2n}$ and take $a_3=a_5=...=a_{2n-1}=1$, $a_1=p-a_2-a_4-...-a_{2n}-(n-1)$. It is to say that, we have $a_3=a_4=...=a_{2n}=1$ and $a_1+a_2=p-(2n-2)$. So $N$  reaches the maximum when $a_1=\lc\frac{p-2n+2}{2}\rc$, $a_2=\lf\frac{p-2n+2}{2}\rf$ and $a_i=1$ for $i=3,4,...,2n$. 
\end{proof}

From now on we will study 4-qubit UPBs of size 8, and show how to find the UPBs $\cF_1,..,\cF_6$. First of all, we present the following observation by counting the data in \cite{Joh13}. 

\begin{lemma}
\label{le:size8>=3}
Let $\cT_{A:B:C:D}$=\{$\ket{f_1,g_1,h_1,i_1}$, $\ket{f_2,g_2,h_2,i_2}$,...,$\ket{f_8,g_8,h_8,i_8}$\} be a UPB of size $8$. If one of the following three conditions holds, then  $\cT_{AB:CD}$ is not a UPB in the coarse graining $\bbC^4\ox\bbC^4$.

(i) $\cT_{A:B:C:D}$ has a qubit having at least four identical vectors.

(ii) There are three subscripts $j_1,j_2,j_3$ such that $\ket{f_{j_1}}=\ket{f_{j_2}}=\ket{f_{j_3}}$ and $\ket{g_{j_1}}=\ket{g_{j_2}}=\ket{g_{j_3}}$. 

(iii) There are five distinct subscripts $j_1,j_2,...,j_5$ such that $\ket{f_{j_1}}=\ket{f_{j_2}}=\ket{f_{j_3}}$ and $\ket{g_{j_4}}$=$\ket{g_{j_5}}$.
\end{lemma}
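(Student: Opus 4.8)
The plan is to route all three cases through a single sufficient criterion, which I would state first: \emph{if five of the eight rows of $\cT_{A:B:C:D}$, regrouped into the bipartite vectors $\ket{e_i}:=\ket{f_i,g_i}\in\cH_A\otimes\cH_B=\bbC^4$ and $\ket{\phi_i}:=\ket{h_i,i_i}\in\cH_C\otimes\cH_D=\bbC^4$, have their $AB$-parts $\ket{e_i}$ spanning a subspace of dimension at most three (or, symmetrically, five rows have their $CD$-parts $\ket{\phi_i}$ spanning a subspace of dimension at most three), then $\cT_{AB:CD}$ is extendible in $\bbC^4\otimes\bbC^4$, hence not a UPB there.} The proof of this criterion is short: the eight vectors $\ket{e_i}\otimes\ket{\phi_i}$ are orthonormal in $\bbC^4\otimes\bbC^4$, since regrouping tensor factors does not change inner products; if $I$ is the five-element index set supplied by the hypothesis with $\lin\{\ket{e_i}:i\in I\}$ of dimension $\le3<4$, pick a nonzero $\ket{x}\in\cH_A\otimes\cH_B$ orthogonal to it, and since $I^c$ has exactly three elements, $\lin\{\ket{\phi_i}:i\in I^c\}$ has dimension $\le3<4$, so pick a nonzero $\ket{y}\in\cH_C\otimes\cH_D$ orthogonal to it. Then $\ket{x}\otimes\ket{y}$ is a nonzero product vector of $\bbC^4\otimes\bbC^4$ orthogonal to every row of $\cT_{AB:CD}$ --- to the rows of $I$ through its first factor, to the rows of $I^c$ through its second --- so $\cT_{AB:CD}$ fails to be a UPB. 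The arithmetic $5+3=8$ is exactly what makes this work, the three leftover vectors automatically lying in a $3$-dimensional subspace of $\bbC^4$.

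It then remains to exhibit such a five-element set under each of (i), (ii), (iii), using only that every $\cH_X$ is two-dimensional. For (i), say the qubit with four coinciding vectors is $A$, with common value $\ket{f}$ (the case $B$ is identical; the cases $C,D$ use the symmetric ``$CD$-part'' alternative). The four rows $j_1,\dots,j_4$ then have $AB$-parts inside the two-dimensional space $\ket{f}\otimes\cH_B$, and adjoining any fifth row increases the span by at most one, giving five rows whose $AB$-parts span a subspace of dimension $\le3$. For (ii), the three rows $j_1,j_2,j_3$ share one and the same $AB$-part $\ket{p}$, so together with any two further rows their $AB$-parts span at most $1+2=3$ dimensions. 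For (iii), the three rows $j_1,j_2,j_3$ have $AB$-parts in $\ket{f}\otimes\cH_B$ while the two rows $j_4,j_5$ have $AB$-parts in $\cH_A\otimes\ket{g}$, where $\ket{f},\ket{g}$ denote the common $A$- and $B$-components respectively; hence all five $AB$-parts lie in $(\ket{f}\otimes\cH_B)+(\cH_A\otimes\ket{g})$, and since these two $2$-planes intersect in the line $\ket{f}\otimes\ket{g}$ their sum has dimension $2+2-1=3$. In each case the hypothesis of the criterion is met, so the lemma follows.

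There is no real obstacle: everything reduces to dimension counts inside $\bbC^2\otimes\bbC^2$, and the constants are all explicit. The one subtlety worth flagging is configuration (iii): the two distinguished planes $\ket{f}\otimes\cH_B$ and $\cH_A\otimes\ket{g}$ are never in general position, since they always meet in the line $\ket{f}\otimes\ket{g}$, and it is precisely this overlap that keeps their combined span at dimension three rather than four --- leaving just enough room (because $5+3=8$) for the extending vector $\ket{x}\otimes\ket{y}$ to exist. I would also make explicit at the outset that the regrouped family $\cT_{AB:CD}$ is a genuine orthonormal product set in $\bbC^4\otimes\bbC^4$, so that ``not a UPB'' is equivalent to the existence of the extending product vector produced above.
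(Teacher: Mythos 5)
Your proof is correct and follows essentially the same route as the paper's: in each case you locate five rows whose $AB$-parts span at most three dimensions of $\bbC^4$ and pair them against the three remaining rows' $CD$-parts, which is exactly the $5+3=8$ count the paper uses for (i) and (ii). The only cosmetic difference is in case (iii), where the paper writes down the explicit extending vector $\ket{f_{j_1}',g_{j_4}'}\otimes\ket{q}$ rather than invoking the dimension formula $2+2-1=3$ for the sum of the planes $\ket{f}\otimes\cH_B$ and $\cH_A\otimes\ket{g}$; the two arguments are equivalent, since that explicit vector spans precisely the orthogonal complement of the three-dimensional sum you identify.
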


\begin{proof}
Take $\cT_{AB:CD}$=\{$\ket{p_1,q_1}$, $\ket{p_2,q_2}$,...,$\ket{p_8,q_8}$\}, where $\ket{p_j}=\ket{f_j,g_j}$ and $\ket{q_j}=\ket{h_j,i_j}$ for $j=1,2,...,8$.

(i) By renaming the subscripts and permuting the qubits we may assume that $\ket{f_1}=\ket{f_2}=\ket{f_3}=\ket{f_4}$. Also $\ket{g_1}$, $\ket{g_2}$, $\ket{g_3}$, $\ket{g_4}$ is linearly dependent. So $\ket{p_1}$, $\ket{p_2}$, $\ket{p_3}$, $\ket{p_4}$ are linearly dependent. Then the space spanned by $\ket{p_1}$, $\ket{p_2}$, $\ket{p_3}$, $\ket{p_4}$ has dimension at most two. Therefore, there exists a $\ket{p}\in\bbC^4$ orthogonal to $\ket{p_1}$, $\ket{p_2}$, $\ket{p_3}$, $\ket{p_4}$ and $\ket{p_5}$. Moreover, there is a $\ket{q}\in\bbC^4$ orthogonal to $\ket{q_6}$, $\ket{q_7}$ and $\ket{q_8}$. So $\ket{p,q}$ is orthogonal to $\cT_{AB:CD}$. By definition $\cT_{AB:CD}$ is not a UPB in $\bbC^4\ox\bbC^4$.

(ii) We have $\ket{p_{j_1}}$=$\ket{p_{j_2}}$=$\ket{p_{j_3}}$. Therefore, the space spanned by $\ket{p_{j_1}}$, $\ket{p_{j_2}}$, $\ket{p_{j_3}}$, $\ket{p_{j_4}}$ and $\ket{p_{j_5}}$ has dimension at most three. Besides the space spanned by $\ket{q_{j_6}}$, $\ket{q_{j_7}}$ and $\ket{q_{j_8}}$ also has dimension at most three. Then, there exist $\ket{p}$, $\ket{q}$ $\in\bbC^4$ such that $\ket{p,q}$ is orthogonal to $\cT_{AB:CD}$. 

(iii) Since the space spanned by $\ket{q_7}$, $\ket{q_8}$ and $\ket{q_9}$ has dimension at most three, there is a $\ket{q}\in\bbC^4$ orthogonal to $\ket{q_7}$, $\ket{q_8}$ and $\ket{q_9}$. Then $\ket{f_{i_1}',g_{i_2}',q}$ is orthogonal to $\cT_{AB:CD}$.
\end{proof}

\begin{lemma}
\label{le:independent}
Let $\cT_{A:B:C:D}$=\{$\ket{f_1,g_1,h_1,i_1}$, $\ket{f_2,g_2,h_2,i_2}$,...,$\ket{f_8,g_8,h_8,i_8}$\} be a UPB of size $8$. If there are $\ket{f_1}=\ket{f_2}=\ket{f_3}$ and $\ket{g_2}=\ket{g_3}=\ket{g_4}$, then the remaining vectors $\ket{f_4}$, $\ket{f_5}$,...,$\ket{f_8}$ are pairwise linearly independent or $\cT_{AB:CD}$ is no longer a UPB in $\cH_{AB}:\cH_{CD}$. Similarly, the remaining vectors $\ket{g_1}$, $\ket{g_5}$, $\ket{g_6}$,...,$\ket{g_8}$ are pairwise linearly independent or $\cT_{AB:CD}$ is no longer a UPB in $\cH_{AB}:\cH_{CD}$.
\end{lemma}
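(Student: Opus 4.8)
The plan is to argue by contradiction in each case, producing a $4\times4$ product vector orthogonal to $\cT_{AB:CD}$ whenever two of the remaining vectors coincide, thereby contradicting the hypothesis that $\cT_{AB:CD}$ is a UPB in $\cH_{AB}:\cH_{CD}$. Throughout I write $\ket{p_j}=\ket{f_j,g_j}\in\cH_{AB}$ and $\ket{q_j}=\ket{h_j,i_j}\in\cH_{CD}$, as in the proof of Lemma~\ref{le:size8>=3}. The starting data $\ket{f_1}=\ket{f_2}=\ket{f_3}$ and $\ket{g_2}=\ket{g_3}=\ket{g_4}$ already give $\ket{p_2}=\ket{p_3}$, so the five two-qubit vectors $\ket{p_2},\ket{p_3},\ket{p_5},\ket{p_6},\ket{p_7}$ (say) span at most a $4$-dimensional space with a coincidence, i.e.\ at most dimension four but effectively four distinct directions at most.

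First I would handle the claim about $\ket{f_4},\dots,\ket{f_8}$. Suppose for contradiction that $\ket{f_k}=\ket{f_\ell}$ for some $4\le k<\ell\le 8$. Then $\ket{f_k}=\ket{f_\ell}$ together with $\ket{f_1}=\ket{f_2}=\ket{f_3}$ gives two separate "columns" of repeated $A$-qubit values; I would split into the subcase where $\{k,\ell\}$ meets $\{4\}$ and where it does not, but in all subcases the point is the same: counting multiplicities on the $A$-qubit of $\cT_{A:B:C:D}$ one finds either a qubit with at least four identical vectors, or a triple of indices with identical $(f,g)$ pairs, or five indices with a triple-plus-pair pattern — exactly conditions (i), (ii), (iii) of Lemma~\ref{le:size8>=3}. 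For instance, $\ket{f_1}=\ket{f_2}=\ket{f_3}$ and $\ket{f_4}=\ket{f_5}$ with $\ket{f_4}\neq\ket{f_1}$ matches condition (iii) directly (three equal $f$'s and a separate equal pair), so Lemma~\ref{le:size8>=3}(iii) already gives a $4\times4$ product vector orthogonal to $\cT_{AB:CD}$. If instead $\ket{f_4}=\ket{f_5}=\ket{f_1}$ we are in condition (i); and if $\ket{f_k}=\ket{f_\ell}$ equals $\ket{f_1}$ for $k,\ell\ge5$ again condition (i) applies; the residual cases with $\ket{f_k}=\ket{f_\ell}\neq\ket{f_1}$, $k,\ell\ge5$, land in condition (iii) using the triple $\ket{f_1}=\ket{f_2}=\ket{f_3}$. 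So in every case Lemma~\ref{le:size8>=3} yields the contradiction, proving pairwise linear independence of $\ket{f_4},\dots,\ket{f_8}$.

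For the symmetric claim about $\ket{g_1},\ket{g_5},\ket{g_6},\dots,\ket{g_8}$, I would run the mirror-image argument on the $B$-qubit, using the triple $\ket{g_2}=\ket{g_3}=\ket{g_4}$ in place of $\ket{f_1}=\ket{f_2}=\ket{f_3}$. A coincidence $\ket{g_k}=\ket{g_\ell}$ among the listed indices again forces one of conditions (i)--(iii) of Lemma~\ref{le:size8>=3} (after permuting qubits so the role of $A$ and $B$ is swapped, which is allowed by Lemma~\ref{le:equiv}(i) and does not change whether $\cT_{AB:CD}$ is a UPB), and Lemma~\ref{le:size8>=3} delivers a $4\times4$ product vector orthogonal to $\cT_{AB:CD}$, a contradiction.

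The main obstacle I anticipate is not any single deep step but the bookkeeping of subcases: one must make sure that \emph{every} pattern of a repeated $f$-value (resp.\ $g$-value) among the relevant indices, in combination with the given triple, genuinely falls under one of the three hypotheses of Lemma~\ref{le:size8>=3}, including the borderline situations where the new repeated value happens to coincide with the triple's value (pushing the count to four identical vectors) versus being distinct from it (giving the $3+2$ pattern of condition (iii)). Care is also needed that when we invoke condition (iii), the "five distinct subscripts" are indeed distinct — this is where the hypothesis that the two triples share only the indices $2,3$ (so that $\ket{f_1}=\ket{f_2}=\ket{f_3}$ uses indices $1,2,3$ while the new pair uses indices from $\{4,\dots,8\}$, automatically disjoint except possibly $4$) is used, and one checks index $4$ separately. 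Once the case analysis is organized, each branch is a one-line appeal to Lemma~\ref{le:size8>=3}.
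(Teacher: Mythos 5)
Your skeleton (assume a repeated value among the listed vectors and reduce to Lemma~\ref{le:size8>=3}) is the same as the paper's, and the subcases where the repeated value coincides with $\ket{f_1}$ are correctly dispatched by condition (i). But the central case contains a genuine error. Condition (iii) of Lemma~\ref{le:size8>=3} requires three equal $f$'s \emph{and} two equal $g$'s on five distinct subscripts; its proof builds the orthogonal vector $\ket{f'}\ox\ket{g'}\ox\ket{q}$, which works only because the triple and the pair live in \emph{different} single-qubit columns. You read it as ``three equal $f$'s and a separate equal pair,'' i.e.\ as covering a triple and a pair in the \emph{same} column. That configuration is not condition (iii) and does not by itself defeat the UPB property: if $\ket{f_1}=\ket{f_2}=\ket{f_3}=\ket{a}$ and $\ket{f_k}=\ket{f_\ell}=\ket{b}$ with $\ket{b}$ not parallel to $\ket{a}$, the five $AB$-parts lie in $\ket{a}\ox\bbC^2\cup\ket{b}\ox\bbC^2$ and generically span all of $\cH_{AB}\cong\bbC^4$, so no vector of $\cH_{AB}$ is orthogonal to all five. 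Hence your ``one-line appeal'' fails in exactly the case the lemma is really about.

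The missing ingredients are the second hypothesis $\ket{g_2}=\ket{g_3}=\ket{g_4}$ and Lemma~\ref{le:xx'}(ii), which you never invoke. If the repeated value has multiplicity $\ge 3$ inside $\{\ket{f_4},\dots,\ket{f_8}\}$, that $f$-triple (subscripts in $\{4,\dots,8\}$) together with the pair $\ket{g_2}=\ket{g_3}$ (subscripts $2,3$) gives condition (iii) verbatim. If the multiplicity is exactly $2$, the pairing $x\leftrightarrow x'$ of Lemma~\ref{le:xx'} forces a \emph{second} disjoint equal pair of $f$'s inside $\{4,\dots,8\}$ (a lone multiplicity-two value leaves an odd, unmatchable set of distinct values in column $A$); the two pairs occupy four of the five positions, so one of them avoids subscript $4$, and that pair combined with the triple $\ket{g_2}=\ket{g_3}=\ket{g_4}$ is condition (iii) with the roles of qubits $A$ and $B$ exchanged (allowed by Lemma~\ref{le:equiv}(i), since the swap preserves the cut $AB{:}CD$). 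You correctly flag that subscript $4$ ``needs to be checked separately,'' but you supply no argument for it, and none exists at the level of a direct citation: a lone pair $\ket{f_4}=\ket{f_\ell}$ matches neither orientation of condition (iii), so the parity step is indispensable. The same repair is needed in your mirror argument for $\ket{g_1},\ket{g_5},\dots,\ket{g_8}$.
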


\begin{proof}
First of all, we prove that either the vectors $\ket{f_4}$, $\ket{f_5}$,...,$\ket{f_8}$ are pairwise linearly independent or $\cT_{AB:CD}$ is no longer a UPB in $\cH_{AB}:\cH_{CD}$.  If the set \{$f_4$, $f_5$,..., $f_8$\} has two identical elements, then the same element has multipicity two, three, four or five. When the element has multiplicity three, four or five, $\cT_{AB:CD}$ is no longer a UPB from lemma \ref{le:size8>=3} (iii). Also a qubit has an even number of distinct elements from lemma \ref{le:xx'} (ii). So when the set \{$f_4$, $f_5$,..., $f_8$\} has an element of multiplicity two, it must contain two different elements of both multiplicity two. It is a contradiction with lemma \ref{le:size8>=3} (iii). Now we have proved it.

Using the similar argument, one may show that the vectors $\ket{g_1}$, $\ket{g_5}$, $\ket{g_6}$,...,$\ket{g_8}$ are pairwise linearly independent or $\cT_{AB:CD}$ is no longer a UPB in $\cH_{AB}:\cH_{CD}$.
\end{proof}

\begin{lemma}
\label{le:44=a3b3}
Let $\cT_{A:B:C:D}$=\{$\ket{f_1,g_1,h_1,i_1}$, $\ket{f_2,g_2,h_2,i_2}$,...,$\ket{f_8,g_8,h_8,i_8}$\} be a 4-qubit UPB of size $8$. If the first and second qubit respectively have three identical vectors, then $\cT_{A:B:C:D}$ is not a UPB of size $8$ in $\cH_{AB}\ox\cH_{CD}$.
\end{lemma}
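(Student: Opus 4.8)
\textbf{Proof plan for Lemma \ref{le:44=a3b3}.}
The plan is to reduce the claim to the structural lemmas already established, especially Lemma \ref{le:independent}, Lemma \ref{le:size8>=3}, and the counting tools Lemma \ref{le:UOM} and Lemma \ref{le:maxsum}. By Lemma \ref{le:xx'}(ii) every qubit of a $4$-qubit UPB has an even number of distinct vectors, so ``three identical vectors'' on a qubit means the multiplicity pattern of that qubit is either $3+1+\dots$ (at least four distinct values, but then we may instead have at least four identical vectors somewhere, which is handled by Lemma \ref{le:size8>=3}(i)) or exactly $3+3+1+1$ or $3+3+2$ --- actually since the total is $8$ and the count of distinct vectors is even, the only options with a multiplicity-$3$ value and no multiplicity-$\ge4$ value are $3+3+1+1$ and $3+3+2$ per qubit. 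I would first dispose of the case where some qubit has four or more identical vectors by citing Lemma \ref{le:size8>=3}(i), so WLOG qubit $1$ has exactly three identical vectors $\ket{f_1}=\ket{f_2}=\ket{f_3}$ and qubit $2$ has exactly three identical vectors; by Lemma \ref{le:size8>=3}(ii) the triple of identical vectors on qubit $2$ cannot coincide (in index set) with $\{1,2,3\}$, so after renaming we may assume $\ket{g_{j}}$ is constant on a $3$-element index set meeting $\{1,2,3\}$ in at most two indices. Up to permuting the rows, I would arrange this as $\ket{f_1}=\ket{f_2}=\ket{f_3}$ and $\ket{g_2}=\ket{g_3}=\ket{g_4}$, exactly the hypothesis of Lemma \ref{le:independent}.

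Next I would invoke Lemma \ref{le:independent}: under this configuration, either $\cT_{AB:CD}$ already fails to be a UPB in $\cH_{AB}\ox\cH_{CD}$ (and we are done), or the vectors $\ket{f_4},\dots,\ket{f_8}$ are pairwise linearly independent and the vectors $\ket{g_1},\ket{g_5},\dots,\ket{g_8}$ are pairwise linearly independent. In the surviving case, combine this with the assumed multiplicity pattern: on qubit $1$ the value $\ket{f_1}$ has multiplicity $3$ and the remaining five vectors $\ket{f_4},\dots,\ket{f_8}$ are pairwise independent, forcing qubit $1$ to have exactly six distinct vectors --- but six is even, consistent, with pattern $3+1+1+1+1+1$; however this contradicts the ``three identical vectors'' plus evenness unless we re-examine. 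The key quantitative step is to compute the o-number $p_1$ of qubit $1$: with multiplicities $3,1,1,1,1,1$ (six distinct values) summing to $8$ we get $p_1=\sum_{x}\m(x)\m(x')$ over orthogonal pairs; since in $\bbC^2$ each value is orthogonal to at most one other, $p_1$ is maximized by pairing the multiplicity-$3$ value with a multiplicity-$1$ value, giving $p_1\le 3\cdot1 + 1\cdot1 = 4$ if two more pairs exist, i.e. $p_1 \le 3+1+1 = 5$ at best, and similarly $p_2\le 5$. For qubits $3$ and $4$ I would use Lemma \ref{le:maxsum} with $2n$ equal to the number of distinct vectors: if qubit $k$ has $2n_k$ distinct vectors with multiplicities summing to $8$, then $p_k \le \lceil(10-2n_k)/2\rceil\lfloor(10-2n_k)/2\rfloor + n_k - 1$, which for $n_k=1$ gives $16$, for $n_k=2$ gives $10$, for $n_k=3$ gives $6$, for $n_k=4$ gives $4$.

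Then I would run the contradiction via Lemma \ref{le:UOM}: $p_1+p_2+p_3+p_4 \ge \binom{8}{2} = 28$. Having bounded $p_1\le 5$ and $p_2\le 5$ from the forced six-distinct-values structure on qubits $1,2$, we need $p_3+p_4\ge 18$, so at least one of $p_3,p_4$ is $\ge 9$, which by the Lemma \ref{le:maxsum} bounds forces that qubit to have at most two distinct vectors ($2n_k\le 2$), i.e. four or more identical vectors --- contradicting Lemma \ref{le:size8>=3}(i) applied to that qubit (which would immediately make $\cT_{AB:CD}$ not a UPB in $\bbC^4\ox\bbC^4$). Even in the boundary subcase where $p_3$ or $p_4$ equals exactly $9$ or $10$ with a $3+3+\dots$ pattern, one checks the pattern again violates evenness or Lemma \ref{le:size8>=3}(ii)/(iii) on that qubit paired with qubit $1$ or $2$. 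Either way $\cT_{AB:CD}$ is not a UPB in $\cH_{AB}\ox\cH_{CD}$, which is the assertion.

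\textbf{Main obstacle.} The delicate part is the bookkeeping on qubits $1$ and $2$: I need to be careful that ``exactly three identical vectors'' together with Lemma \ref{le:independent} genuinely pins the distinct-vector count to six (pattern $3+1+1+1+1+1$) rather than $3+3+1+1$ or $3+3+2$ --- this uses that the five indices $\{4,\dots,8\}$ carry pairwise-independent $f$-vectors, and symmetrically for $g$, so no second multiplicity-$\ge2$ value can appear on qubit $1$ or $2$. Once the patterns on qubits $1,2$ are nailed down to give $p_1,p_2\le 5$, the $p_3+p_4\ge 18$ pigeonhole together with Lemma \ref{le:maxsum} and Lemma \ref{le:size8>=3}(i) closes the argument cleanly; the residual boundary cases (where the Lemma \ref{le:maxsum} bound is attained with a $3+3$ pattern) are the only place a short extra case-check is needed, and there Lemma \ref{le:size8>=3}(ii),(iii) applied across two qubits finishes it.
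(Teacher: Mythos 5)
Your overall architecture is the same as the paper's: dispose of multiplicity $\ge 4$ and of triples sitting on too many rows via Lemma \ref{le:size8>=3}, reduce to the configuration of Lemma \ref{le:independent} to pin the patterns on qubits $1,2$ down to $3+1+1+1+1+1$ (hence $p_1=p_2=5$), and then force a contradiction from $p_3+p_4\ge 18$ using Lemmas \ref{le:UOM} and \ref{le:maxsum}. However, two of your steps do not go through as written. First, after excluding the case where the two triples occupy the same three rows via Lemma \ref{le:size8>=3}(ii), you immediately normalize to $\ket{f_1}=\ket{f_2}=\ket{f_3}$, $\ket{g_2}=\ket{g_3}=\ket{g_4}$; this silently assumes the two index triples overlap in exactly two rows. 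If they overlap in $0$ or $1$ rows they occupy five or six distinct rows, and that case is not covered by Lemma \ref{le:independent} at all --- it must be killed separately by Lemma \ref{le:size8>=3}(iii) (a triple on qubit $1$ together with two of the three equal $g$'s lying outside it gives five distinct subscripts). This is exactly the ``three, five or six distinct product vectors'' dichotomy the paper runs before entering the o-number count.

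Second, and more seriously, your resolution of the endgame is wrong as stated. The claim that $p_k\ge 9$ forces at most two distinct vectors on qubit $k$ is false: the pattern $3+3+1+1$ with the two triples mutually orthogonal has four distinct vectors and $p_k=3\cdot3+1\cdot1=10$, and this is precisely the case that survives ($p_3+p_4\ge18$ with each $p_k\le10$ forces one column to be $3\times3+1\times1$ and the other to have $p\ge8$, hence a repeated value). Your proposed fix --- ``evenness or Lemma \ref{le:size8>=3}(ii)/(iii) on that qubit paired with qubit $1$ or $2$'' --- fails: $3+3+1+1$ has an even number of distinct values, and Lemma \ref{le:size8>=3}(iii) for the cut $\cH_{AB}\ox\cH_{CD}$ only produces an orthogonal product vector when the triple and the pair live on the \emph{same} side of the bipartition (its proof builds $\ket{f'}\ox\ket{g'}\ox\ket{q}$ with $\ket{q}\in\cH_{CD}$ annihilating the three leftover rows), so a triple on qubit $3$ cannot be played against a pair on qubit $1$ or $2$. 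The correct finish, as in the paper, pairs qubit $3$ with qubit $4$: the $3+3+1+1$ column supplies a triple of identical $h$'s, the other column supplies a pair of identical $i$'s, these can always be chosen on five distinct rows, and Lemma \ref{le:size8>=3}(iii) applied on the $CD$ side of the cut yields the contradiction. With these two repairs your argument coincides with the paper's proof.
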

\begin{proof}
Suppose $\ket{a_1}$, $\ket{a_2}$, $\ket{a_3}$ are three identical vectors among $\ket{f_1}$, $\ket{f_2}$,...,$\ket{f_8}$ and $\ket{b_1}$, $\ket{b_2}$, $\ket{b_3}$ are three identical vectors among $\ket{g_1}$, $\ket{g_2}$,...,$\ket{g_8}$. If $\ket{a_1}$, $\ket{a_2}$, $\ket{a_3}$, $\ket{b_1}$, $\ket{b_2}$, $\ket{b_3}$ are in three, five or six distinct product vectors of $\cT_{A:B:C:D}$, then $\cT_{AB:CD}$ is not a UPB from Lemma \ref{le:size8>=3} (ii) and (iii).

We only need to investigate the case that $\ket{a_1}$, $\ket{a_2}$, $\ket{a_3}$, $\ket{b_1}$, $\ket{b_2}$, $\ket{b_3}$ are in four distinct product vectors. Denote by $U$ the UOM over the UPB $\cT_{A:B:C:D}$. Moreover, $p_j$ are o-number of column $j$ of the $U$ for $j=1,2,3,4$. Then $U$ has $p_1=5$ and $p_2=5$ in condition of Lemma \ref{le:independent}. Also $p_1+p_2+p_3+p_4\geq8(8-1)/2$ from Lemma \ref{le:UOM} (vi). So $p_3+p_4\geq18$. We have $p_3$, $p_4$ $\leq (\frac{8-2n+2}{2})^2=n^2-9n+24$ from Lemma \ref{le:maxsum} and $p=8$ in condition of Lemma \ref{le:maxsum}.  Then the possible value of $n$ is $1,2,3,4$. If $n=1$, then $a_1=a_2=\frac{8-2\times 1+2}{2}=4$ in Lemma \ref{le:maxsum}. That is, $\cT_{A:B:C:D}$ has a qubit having four identical vectors. Then $\cT_{A:B:C:D}$ is not a UPB from Lemma \ref{le:size8>=3} (i). If $n=3$ or $4$, then $p_3, p_4\leq 6$ or $4$. It makes a contradiction with $p_3+p_4\geq18$. If $n=2$, then $p_3, p_4\leq 10$. So the case $n=2$ is the only case that satisfies the condition $p_3+p_4\geq18$. To satisfy $p_3+p_4\geq18$, one of them must be $3\times 3+1\times 1$ and the other one could be either of $3\times 3+1\times 1$, $3\times 2+2\times 1$, $2\times 2+ 2\times 2$. There is no harm in supposing $p_3=3\times 3+1\times 1$. Then $p_4=3\times 3+1\times 1$, $3\times 2+2\times 1$ or $2\times 2+ 2\times 2$. We can obtain $h_{i_1}=h_{i_2}=h_{i_3}$ and $i_{i_4}=i_{i_5}$ for five distinct descripts $i_1$, $i_2$, $i_3$, $i_4$, $i_5$ $\in \{1, 2, 3, 4, 5, 6, 7, 8\}$. So $\cT_{A:B:C:D}$ is not a UPB from Lemma \ref{le:size8>=3} (iii).
\end{proof}

\begin{lemma}
\label{le:3f3h}
Let $\cT_{A:B:C:D}$=\{$\ket{f_1,g_1,h_1,i_1}$, $\ket{f_2,g_2,h_2,i_2}$,...,$\ket{f_8,g_8,h_8,i_8}$\} be a 4-qubit UPB of size $8$. Then $\cT_{AB:CD}$ is not a UPB of size $8$ in $\cH_{AB}\ox\cH_{CD}$ when one of the following three conditions is satisfied. 

(i) There are three subscripts $j_1,j_2,j_3$ such that $\ket{f_{j_1}}=\ket{f_{j_2}}=\ket{f_{j_3}}$ and $\ket{h_{j_1}}=\ket{h_{j_2}}=\ket{h_{j_3}}$.

(ii) There are three subscripts $j_1,j_2,j_3$ such that $\ket{f_{j_1}}=\ket{f_{j_2}}=\ket{f_{j_3}}$, $\ket{g_{j_1}}=\ket{g_{j_2}}$ and $\ket{h_{j_1}}=\ket{h_{j_2}}$. 

(iii) There are three subscripts $j_1,j_2,j_3$ such that $\ket{f_{j_1}}=\ket{f_{j_2}}$, $\ket{g_{j_1}}=\ket{g_{j_2}}$ and $\ket{h_{j_1}}=\ket{h_{j_2}}=\ket{h_{j_3}}$. 
\end{lemma}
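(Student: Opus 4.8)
\textbf{Proof proposal for Lemma \ref{le:3f3h}.}
The plan is to reduce each of the three cases to an already-proven non-existence criterion from Lemma \ref{le:size8>=3}, using the fact that a coarse-grained pair of columns of the UOM produces, at a repeated vector variable, a low-dimensional span on the $AB$ side, which leaves room to find an orthogonal product vector on the $CD$ side. As in the proof of Lemma \ref{le:size8>=3}, write $\cT_{AB:CD}=\{\ket{p_j,q_j}\}$ with $\ket{p_j}=\ket{f_j,g_j}\in\bbC^4$ and $\ket{q_j}=\ket{h_j,i_j}\in\bbC^4$. Throughout, the key quantitative input is: if $k$ of the $\ket{p_j}$ lie in a subspace of dimension $\le r$, then any $k+(4-r)$ of the $\ket{p_j}$ (including those $k$) admit a common orthogonal vector in $\bbC^4$ whenever $k+(4-r)\le 4$, and symmetrically on the $CD$ side; once we split the eight indices into a group killed on $AB$ and a group killed on $CD$ with total $\ge 8$, we get a product vector orthogonal to all of $\cT_{AB:CD}$.

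First I would handle case (i). Permuting qubits (Lemma \ref{le:equiv}(i)) I may as well rename the relevant columns so that $j_1,j_2,j_3$ carry $\ket{f_{j_1}}=\ket{f_{j_2}}=\ket{f_{j_3}}$ on qubit $A$ and $\ket{h_{j_1}}=\ket{h_{j_2}}=\ket{h_{j_3}}$ on qubit $C$; here $A$ and $C$ sit in \emph{different} halves of the cut $AB:CD$, so I instead pass to the cut $AC:BD$ — but by Lemma \ref{le:F6}-style reasoning we only ever test the $4\times4$ coarse graining, and the statement of the lemma is specifically about $\cH_{AB}\ox\cH_{CD}$. The cleaner route: under the $AB:CD$ cut, the three vectors $\ket{p_{j_1}},\ket{p_{j_2}},\ket{p_{j_3}}$ share their $A$-component, hence span at most a $3$-dimensional (in fact, since the $B$-parts are three qubit states, at most $2$-dimensional) subspace of $\bbC^4$; likewise $\ket{q_{j_1}},\ket{q_{j_2}},\ket{q_{j_3}}$ span at most a $2$-dimensional subspace of $\bbC^4$. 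Choose $\ket{p}\in\bbC^4$ orthogonal to $\lin\{\ket{p_{j_1}},\ket{p_{j_2}},\ket{p_{j_3}}\}$ (dimension $\le2$, so a $\ge2$-dimensional orthogonal complement) together with one further $\ket{p_{j_4}}$; and $\ket{q}\in\bbC^4$ orthogonal to the $\ket{q_\ell}$ for the remaining four indices together with one more — wait, that is five vectors on the $CD$ side. So instead balance it: $\ket{p}$ orthogonal to $\{p_{j_1},p_{j_2},p_{j_3}\}$ and to $p_{j_4}$ and $p_{j_5}$ (five indices total, span $\le 2+\text{(two more)}=4$, so generically no common orthogonal vector). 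This does not close, which tells me case (i) as literally stated must be using something stronger — indeed I would simply invoke Lemma \ref{le:size8>=3}(iii) directly: $\ket{f_{j_1}}=\ket{f_{j_2}}=\ket{f_{j_3}}$ gives the triple-identity hypothesis, and $\ket{h_{j_1}}=\ket{h_{j_2}}$ (a weaker consequence of the hypothesis) applied \emph{after permuting} $C$ into the first half — no. The honest plan for (i): permute the qubits so the cut becomes $AC:BD$, apply Lemma \ref{le:size8>=3}(ii) there (triple identity on $A$ and on $C$), concluding $\cT_{AC:BD}$ is not a UPB; but the claim is about $\cH_{AB}\ox\cH_{CD}$, so this only works if we additionally note that failing to be a UPB in \emph{one} $4\times4$ coarse-graining is what we need — which it is, because the lemma's conclusion is about the specific named cut. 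Thus for (i) I expect the actual argument is: under $AB:CD$, the triple $\{p_{j_1},p_{j_2},p_{j_3}\}$ spans $\le2$ dimensions, pick $\ket{p}$ orthogonal to it and to $p_{j_4}$ (four indices, $\dim\le3$, orthogonal complement $\ne0$); then on $CD$ the three vectors $\{q_{j_1},q_{j_2},q_{j_3}\}$ span $\le 2$ dimensions (shared $h$), pick $\ket{q}$ orthogonal to these \emph{three} plus $q_{j_5}$ — but $j_4$ is already handled on the $AB$ side, leaving indices $j_5,\dots,j_8$, i.e. four of them, with $\{q_{j_1},q_{j_2},q_{j_3}\}$ contributing only because $j_1,j_2,j_3$ overlap the $AB$-killed set; so on $CD$ we need $\ket{q}\perp q_{j_5},q_{j_6},q_{j_7},q_{j_8}$, which is four generic vectors — again no.

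Given the repeated near-miss, the realistic plan is: show each of (i),(ii),(iii) is \emph{subsumed} by Lemma \ref{le:size8>=3}(ii)/(iii) after an appropriate qubit permutation (Lemma \ref{le:equiv}(i)) that moves the two ``interacting'' qubits into opposite halves of some $4\times4$ cut, and observe that by Lemma \ref{le:geupb}(i)-type reasoning a $4$-qubit UPB that fails the $4\times4$ test on \emph{any} cut in particular fails it, but the lemma's phrasing fixes $AB:CD$, so one more permutation (switching which physical qubits are called $A,B,C,D$, allowed since the hypothesis is stated up to relabeling) brings it to $AB:CD$. Concretely: (i) with triple-$f$ and triple-$h$ — relabel so these become triple on the \emph{two different halves}, then after a further swap it is exactly the hypothesis of part (ii) of Lemma \ref{le:size8>=3} read on $AB:CD$; (ii) with triple-$f$, double-$g$, double-$h$ — the triple-$f$ and double-$h$ (ignoring $g$) after relabeling qubits $A,C$ into the two halves gives five indices $j_1,j_2,j_3$ (triple on first-half qubit) and $j_1,j_2$ again contributing a double on second-half qubit, i.e. hypothesis (iii) of Lemma \ref{le:size8>=3}; similarly (iii) is the mirror image, triple-$h$ plus double-$f$, giving hypothesis (iii) with the roles of the two halves exchanged, which is symmetric. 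So the proof is: for each case, exhibit the qubit permutation, cite Lemma \ref{le:equiv}(i), then cite the relevant clause of Lemma \ref{le:size8>=3}. The main obstacle — and the step I would be most careful about — is bookkeeping the qubit relabelings correctly: verifying that in every one of the three cases the two qubits named in the hypothesis can be routed into opposite halves of the $AB:CD$ cut while the multiplicity pattern ($3,3$ or $3,2,2$) is preserved, so that Lemma \ref{le:size8>=3}(ii) or (iii) literally applies; this is where a wrong permutation would silently break the argument.
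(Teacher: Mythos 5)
Your proposal does not close, and the central reduction it leans on is invalid. Lemma \ref{le:size8>=3}(ii) and (iii) concern coincidences in the $f$- and $g$-columns, i.e.\ both on the \emph{same} side of the cut $AB{:}CD$: there the three identical $\ket{p_j}=\ket{f_j,g_j}$ make five of the $\ket{p_j}$ span at most three dimensions, so a witness $\ket{p,q}$ falls out of a simple dimension count. The hypotheses of Lemma \ref{le:3f3h} place the coincidences in the $f$- and $h$-columns, i.e.\ on \emph{opposite} sides of the cut, and — as your own repeated ``again no'' computations show — no split of the eight indices into an $AB$-killed group and a $CD$-killed group succeeds by dimension counting alone. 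Your escape route, permuting qubits so that $A$ and $C$ land in the same half and then citing Lemma \ref{le:size8>=3}, proves the wrong statement: it shows the set is not a UPB in $\cH_{AC}\ox\cH_{BD}$, not in $\cH_{AB}\ox\cH_{CD}$. Failing the UPB property in one $4\times4$ coarse graining does not imply failing it in another (the entire point of Lemma \ref{le:F6} is that these are independent conditions), and the hypothesis of Lemma \ref{le:3f3h} is \emph{not} ``stated up to relabeling'' — it names specific columns relative to the specific cut in the conclusion.

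The missing idea is that the constraints must be propagated through the \emph{mutual orthogonality of the rows}, not through span dimensions alone. For instance, in case (i) the three rows $j_1,j_2,j_3$ share both $\ket{f}$ and $\ket{h}$, so their pairwise orthogonality is carried entirely by the $\ket{g_j,i_j}$, forcing (up to local equivalence) $g_{j_1}=g_{j_2}=0$, $g_{j_3}=1$ and $i_{j_1}\perp i_{j_2}$; combining this with Lemmas \ref{le:xx'}, \ref{le:size8>=3} and \ref{le:44=a3b3} pins down the admissible values in the remaining rows, and a finite case analysis (the matrices $U_{11},U_{12},\dots$ in the paper's Appendix) exhibits in each branch an explicit product vector orthogonal to all eight rows. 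Cases (ii) and (iii) are handled by the same mechanism with a larger case tree. Without this orthogonality bookkeeping there is no route from the stated hypotheses to the conclusion, so the proposal as written has a genuine gap rather than merely a different organization of the same proof.
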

\begin{proof}
(i)  We prove the assertion by contradiction. Suppose $\cT_{AB:CD}$ is a UPB of size $8$ in $\cH_{AB}\ox\cH_{CD}$.
Up to the equivalence, we can assume $j_1=1$, $j_2=2$ and $j_3=3$, and $f_1=f_2=f_3=h_1=h_2=h_3=0$. We express the UOM of $\cT_{A:B:C:D}$ as
\begin{eqnarray}
U_1=
\bma
0 & g_1 & 0 & i_1\\
0 & g_2 & 0 & i_2\\
0 & g_3 & 0 & i_3\\
f_4 & g_4 & h_4 & i_4\\
f_5 & g_5 & h_5 & i_5\\
f_6 & g_6 & h_6 & i_6\\
f_7 & g_7 & h_7 & i_7\\
f_8 & g_8 & h_8 & i_8\\
\ema.	
\end{eqnarray}
Since the first three rows of $U_1$ correspond to three orthogonal product vectors, we obtain that $\ket{g_1,i_1},\ket{g_2,i_2},\ket{g_3,i_3}$ are orthogonal. Up to equivalence we may assume that 
$g_1=g_2=0$ and $g_3=1$. Since $\cT_{AB:CD}$ is a UPB of size $8$ in $\cH_{AB}\ox\cH_{CD}$,
Lemma \ref{le:size8>=3} (iii) shows that $g_4,g_5,...,g_8$ are distinct. In addition, we have $g_4,g_5,...,g_8\neq 0$ from Lemma \ref{le:44=a3b3}. So one of them must be $1$ by Lemma \ref{le:xx'} (ii).
So $U_1$ is equivalent to 
\begin{eqnarray}
\label{eq:U1+}
\bma
0 & 0 & 0 & i_1\\
0 & 0 & 0 & i_1'\\
0 & 1 & 0 & i_3\\
f_4 & 1 & h_4 & i_4\\
f_5 & g_5 & h_5 & i_5\\
f_6 & g_5' & h_6 & i_6\\
f_7 & g_7 & h_7 & i_7\\
f_8 & g_7' & h_8 & i_8\\
\ema.
\end{eqnarray}
We have $g_5,g_7\neq 0,1$ from the discussion in the paragraph above \eqref{eq:U1+}. Since the first and second row vectors are orthogonal to the last four row vectors of $U_2$, we obtain that $\ket{0,0}\in\cH_A\ox\cH_C$ is orthogonal to $\ket{f_j,h_j}$ for $j=5,6,7,8$. Since $\cT_{AB:CD}$ is a UPB of size $8$, Lemma \ref{le:size8>=3} (iii) shows that $f_5,f_6,f_7,f_8$ contain exactly two $1$'s, and so do $h_5,h_6,h_7,h_8$. So the matrix in \eqref{eq:U1+} is equivalent to 
\begin{eqnarray}
U_{11}=
\bma
0 & 0 & 0 & i_1\\
0 & 0 & 0 & i_1'\\
0 & 1 & 0 & i_3\\
f_4 & 1 & h_4 & i_4\\
1 & g_5 & h_5 & i_5\\
1 & g_5' & h_6 & i_6\\
f_7 & g_7 & 1 & i_7\\
f_8 & g_7' & 1 & i_8\\
\ema
\text{\quad or\quad}
U_{12}=
\bma
0 & 0 & 0 & i_1\\
0 & 0 & 0 & i_1'\\
0 & 1 & 0 & i_3\\
f_4 & 1 & h_4 & i_4\\
1 & g_5 & h_5 & i_5\\
f_6 & g_5' & 1 & i_6\\
1 & g_7 & h_7 & i_7\\
f_8 & g_7' & 1 & i_8\\
\ema,
\end{eqnarray}
where $g_5\neq g_7,g_7'$. For $U_{11}$, Lemma \ref{le:size8>=3} shows that $h_5,f_7,f_8\ne0$. Since row $5$ is orthogonal to row $7$ and $8$, we have $i_5'=i_7=i_8$. So column $3$ and $4$ of $U_{11}$ shows a contradiction with Lemma \ref{le:size8>=3} (iii) and the fact that $\cT_{AB:CD}$ is a UPB of size $8$. 

On the other hand for $U_{12}$, similar to the above argument for $U_{11}$ one can show that $i_8=i_5'$, $i_7=i_6'$, $f_8=f_6'$ and $h_7=h_5'$. Then row $4$ of $U_{12}$ is not orthogonal to all four bottom row vectors of $U_{12}$. It is a contradiction with the fact that $U_{12}$ is a UOM. We have proven that $\cT_{AB:CD}$ is not a UPB of size $8$ in $\cH_{AB}\ox\cH_{CD}$.

(ii) We prove the assertion by contradiction. Suppose $\cT_{AB:CD}$ is a UPB of size $8$ in $\cH_{AB}\ox\cH_{CD}$.
Up to the equivalence, we can assume $j_1=1$, $j_2=2$ and $j_3=3$ and $f_1=f_2=f_3=g_1=g_2=h_1=h_2=0$. We express the UOM of $\cT_{A:B:C:D}$ as
\begin{eqnarray}
\label{eq:3+2+2}
U_1=
\bma
0 & 0 & 0 & 0\\
0 & 0 & 0 & 1\\
0 & g_3 & h_3 & i_3\\
f_4 & g_4 & h_4 & i_4\\
f_5 & g_5 & h_5 & i_5\\
f_6 & g_6 & h_6 & i_6\\
f_7 & g_7 & h_7 & i_7\\
f_8 & g_8 & h_8 & i_8\\
\ema.
\end{eqnarray}
We claim that there are only two cases $U_{11}$ and $U_{12}$ in \eqref{eq:startclass2}, where $f_5$ may be $f_7$ or $f_7'$ and $f_5, f_6, f_7\neq 0,1$ is satisfied.
\begin{eqnarray}
\label{eq:startclass2}
U_{11}=
\bma
0 & 0 & 0 & 0\\
0 & 0 & 0 & 1\\
0 & g_3 & h_3 & i_3\\
1 & g_4 & h_4 & i_4\\
f_5 & g_5 & h_5 & i_5\\
f_5' & g_6 & h_6 & i_6\\
f_7 & g_7 & h_7 & i_7\\
f_7' & g_8 & h_8 & i_8\\
\ema,
\text{\quad}
U_{12}=
\bma
0 & 0 & 0 & 0\\
0 & 0 & 0 & 1\\
0 & g_3 & h_3 & i_3\\
1 & g_4 & h_4 & i_4\\
1 & g_5 & h_5 & i_5\\
f_6 & g_6 & h_6 & i_6\\
f_6 & g_7 & h_7 & i_7\\
f_6' & g_8 & h_8 & i_8\\
\ema.
\end{eqnarray}
We can obtain that $k$ of $f_4, f_5, f_6, f_7, f_8$ of $U_1$ equal to $1$ from Lemma\ref{le:xx'}(i), where $1\leq k\leq 5$ and $k$ is a positive integer. Moreover, we have $f_4, f_5, f_6, f_7, f_8 \neq 0$ and $1\leq k\leq 3$ from Lemma \ref{le:size8>=3}(i). However, if $k=3$, then column 1, 2 of $U_1$ make a contradiction with the fact that $U_1$ is a UOM and Lemma \ref{le:size8>=3}(iii). Then we have $k=1,2$. Namely, up to equivalence we obtain two cases, $f_4=1,f_5,f_6,f_7,f_8\neq 1$ and $f_4=f_5=1,f_6,f_7,f_8\neq 1$. For $f_4=1,f_5,f_6,f_7,f_8\neq 1$, at most two of $f_5,f_6,f_7,f_8$ are the same from \ref{le:size8>=3}(iii). Moreover, from Lemma \ref{le:xx'}(i) we can obtain $f_6=f_5',f_8=f_7'$ up to equivalent, where $f_5$ may be $f_7$ or $f_7'$. So we have proved the claim in the line above \eqref{eq:startclass2}.

For $U_{11}$ in \eqref{eq:startclass2}, we claim that there are two cases, $U_{111}$ and $U_{112}$. We can obtain $f_5,f_5',f_7,f_7'\neq 1$ from $f_5,f_7\neq 0,1$ in the line above \eqref{eq:startclass2}.
Since row 1, 2 are othogonal to row 3, 5, 6, 7 and 8 of $U_{11}$, we obtain that $\ket{0,0}\in\cH_B\ox\cH_C$ is orthogonal to $g_j,h_j$ for $j=3,5,6,7,8$. First, one can show at most two of $g_3,g_5,g_6,g_7,g_8$ are $1$'s. Otherwise, row 1 and 2 of $U_{11}$ is a contradiction with Lemma \ref{le:size8>=3}(iii) and the fact that $U_{11}$ is a UOM by the assumption $U_1$ in \eqref{eq:3+2+2} is a UOM. Second, one can show at most three of $h_3,h_5,h_6,h_7,h_8$ are $1$'s from Lemma \ref{le:size8>=3}(i). Then we have shown that two of $g_3,g_5,g_6,g_7,g_8$ are $1$'s and three of $h_3,h_5,h_6,h_7,h_8$ are $1$'s in $U_{11}$. If $g_3=1$, up to equivalence we can assume $g_5=1$. We directly obtain $h_6=h_7=h_8=1$. Then $U_{11}$ becomes $U_{111}$. On the other hand for $g_3\neq 1$, that is $h_3=1$, up to equivalence we can assume $g_5=g_6=1$. We directly obtain $h_7=h_8=1$. Then $U_{11}$ becomes $U_{112}$. Now we have proved the claim at the beginning of this paragraph. 

There is $\ket{1,0,0,i_4'}\in\cH_A\ox\cH_B\ox\cH_C\ox\cH_D$ orthogonal to all row vectors of $U_{111}$ and $U_{112}$. It is a contradiction with the fact $U_{111}$ and $U_{112}$ are UOMs of size 8 in $\cH_{CD}\ox\cH_{CD}$ by the assumption $U_1$ in \eqref{eq:3+2+2} is a UOM.
\begin{eqnarray}
U_{111}=
\bma
0 & 0 & 0 & 0\\
0 & 0 & 0 & 1\\
0 & 1 & h_3 & i_3\\
1 & g_4 & h_4 & i_4\\
f_5 & 1 & h_5 & i_5\\
f_5' & g_6 & 1 & i_6\\
f_7 & g_7 & 1 & i_7\\
f_7' & g_8 & 1 & i_8\\
\ema,
\text{\quad}
U_{112}=
\bma
0 & 0 & 0 & 0\\
0 & 0 & 0 & 1\\
0 & g_3 & 1 & i_3\\
1 & g_4 & h_4 & i_4\\
f_5 & 1 & h_5 & i_5\\
f_5' & 1 & h_6 & i_6\\
f_7 & g_7 & 1 & i_7\\
f_7' & g_8 & 1 & i_8\\
\ema.
\end{eqnarray}

For $U_{12}$ in \eqref{eq:startclass2}, we claim that there are four cases $U_{121}$, $U_{122}$, $U_{123}$ and $U_{124}$ in \eqref{eq:2+2of3+2+2}. Since row 1, 2 are othogonal to row 3, 6, 7 and 8 of $U_{11}$, we obtain that $\ket{0,0}\in\cH_B\ox\cH_C$ is orthogonal to $\ket{g_j,h_j}$ for $j=3,6,7,8$. First, one can show at most two of $g_3,g_6,g_7,g_8$ are $1$'s. Otherwise,  row 1 and 2 of $U_{11}$ is a contradiction with Lemma \ref{le:size8>=3}(iii) and the fact that $U_{12}$ is a UOM by the assumption that $U_1$ in \eqref{eq:3+2+2} is a UOM. Second, one can show at most three of $h_3,h_6,h_7,h_8$ are $1$'s from Lemma \ref{le:size8>=3}(i). If three of $h_3,h_6,h_7,h_8$ are $1$'s, then there exists $\ket{0,1,0,i_3'}$ orthogonal to $U_{12}$ for $g_3=1$ and there exists $\ket{0,1,0,i_j}$ orthogonal to $U_{12}$ for $h_3=1$ and $g_j=1$ for $j=6$, or $7$, or $8$. It is a contradiction with the definition of UPB and the fact $U_{12}$ is a UOM by the assumption $U_1$ in \eqref{eq:3+2+2} is a UOM. Then we have shown that two of $g_3,g_6,g_7,g_8$ are $1$'s and two of $h_3,h_6,h_7,h_8$ are $1$'s in $U_{12}$. For $g_3=1$, we have two cases, $g_6=1$ and $g_8=1$. In case one one can directly obtain $h_7=h_8=1$. Then $U_{12}$ becomes $U_{121}$. In case two one can directly obtain $h_6=h_7=1$. Then $U_{12}$ becomes $U_{122}$. On the other hand for $g_3\neq 1$, that is $h_3=1$, we also have two cases, $g_6=g_7=1$ and $g_6=g_8=1$. In case one one can directly obtain $h_8=1$. Then $U_{12}$ becomes $U_{123}$. In case two one can directly obtain $h_7=1$. Then $U_{12}$ becomes $U_{124}$. Now we have proved the claim at the beginning of this paragraph. 
\begin{widetext}
\begin{eqnarray}
\label{eq:2+2of3+2+2}
U_{121}=
\bma
0 & 0 & 0 & 0\\
0 & 0 & 0 & 1\\
0 & 1 & h_3 & i_3\\
1 & g_4 & h_4 & i_4\\
1 & g_5 & h_5 & i_5\\
f_6 & 1 & h_6 & i_6\\
f_6 & g_7 & 1 & i_7\\
f_6' & g_8 & 1 & i_8\\
\ema,
\text{\quad}
U_{122}=
\bma
0 & 0 & 0 & 0\\
0 & 0 & 0 & 1\\
0 & 1 & h_3 & i_3\\
1 & g_4 & h_4 & i_4\\
1 & g_5 & h_5 & i_5\\
f_6 & g_6 & 1 & i_6\\
f_6 & g_7 & 1 & i_7\\
f_6' & 1 & h_8 & i_8\\
\ema,
\text{\quad}
U_{123}=
\bma
0 & 0 & 0 & 0\\
0 & 0 & 0 & 1\\
0 & g_3 & 1 & i_3\\
1 & g_4 & h_4 & i_4\\
1 & g_5 & h_5 & i_5\\
f_6 & 1 & h_6 & i_6\\
f_6 & 1 & h_7 & i_7\\
f_6' & g_8 & 1 & i_8\\
\ema,
\text{\quad}
U_{124}=
\bma
0 & 0 & 0 & 0\\
0 & 0 & 0 & 1\\
0 & g_3 & 1 & i_3\\
1 & g_4 & h_4 & i_4\\
1 & g_5 & h_5 & i_5\\
f_6 & 1 & h_6 & i_6\\
f_6 & g_7 & 1 & i_7\\
f_6' & 1 & h_8 & i_8\\
\ema.
\end{eqnarray}
\end{widetext}

In the following we show that neither of $U_{121}$, $U_{122}$, $U_{123}$, $U_{124}$ is a UOM in $\cH_{AB}\ox\cH_{CD}$. This will prove the claim of (ii).

For $U_{121}$ in \eqref{eq:2+2of3+2+2}, we have $f_6, f_6'\neq 1$ from the line above \eqref{eq:startclass2}. We obtain $g_7, g_8\neq 0$ from Lemma \ref{le:44=a3b3} and $h_3\neq 0$ from Lemma \ref{le:3f3h} (i). Since row 3 is orthogonal to row 7, 8, we can obtain $i_7=i_8=i_3'$. Then $U_{121}$ becomes $U_{1211}$ in \eqref{eq:12next2+2of3+2+2}. So there exists $\ket{0,0,1,i_3}\in\cH_A\ox\cH_B\ox\cH_C\ox\cH_D$ orthogonal to all row vectors of $U_{1211}$. It shows a contradiction with the definition of UOM and the fact that $U_{1211}$ is a UOM by the assumption $U_1$ in \eqref{eq:3+2+2} is a UOM.

For $U_{122}$ in \eqref{eq:2+2of3+2+2}, we have $f_6\neq 1$ from the line above \eqref{eq:startclass2}. We obtain $g_6, g_7\neq 0$ from Lemma \ref{le:44=a3b3} and $h_3\neq 0$ from Lemma \ref{le:3f3h} (i). Since row 3 is orthogonal to row 6, 7, we can obtain $i_6=i_7=i_3'$. Then $U_{122}$ becomes $U_{1221}$ in \eqref{eq:12next2+2of3+2+2}. So there exists $\ket{0,0,1,i_3}\in\cH_A\ox\cH_B\ox\cH_C\ox\cH_D$ orthogonal to all row vectors of $U_{1221}$. It shows a contradiction with the definition of UPB and the fact that $U_{1221}$ is a UOM.
\begin{eqnarray}
\label{eq:12next2+2of3+2+2}
U_{1211}=
\bma
0 & 0 & 0 & 0\\
0 & 0 & 0 & 1\\
0 & 1 & h_3 & i_3\\
1 & g_4 & h_4 & i_4\\
1 & g_5 & h_5 & i_5\\
f_6 & 1 & h_6 & i_6\\
f_6 & g_7 & 1 & i_3'\\
f_6' & g_8 & 1 & i_3'\\
\ema,
\text{\quad}
U_{1221}=
\bma
0 & 0 & 0 & 0\\
0 & 0 & 0 & 1\\
0 & 1 & h_3 & i_3\\
1 & g_4 & h_4 & i_4\\
1 & g_5 & h_5 & i_5\\
f_6 & g_6 & 1 & i_3'\\
f_6 & g_7 & 1 & i_3'\\
f_6' & 1 & h_8 & i_8\\
\ema.
\end{eqnarray}

For $U_{123}$ in \eqref{eq:2+2of3+2+2}, we have $f_6\neq 0,1$ from the line above \eqref{eq:startclass2}. We obtain $g_3,g_4,g_5\neq 0$ from Lemma \ref{le:44=a3b3}. In the following we show $h_4,h_5,h_7\neq 1$. First, one can obtain $h_4\neq 1$. Otherwise, we have $\ket{1,0,0,i_5'}\in\cH_A\ox\cH_B\ox\cH_C\ox\cH_D$ is orthogonal to all row vectors of $U_{123}$. Second, one can obtain $h_5\neq 1$. Otherwise, we have $\ket{1,0,0,i_4'}\in\cH_A\ox\cH_B\ox\cH_C\ox\cH_D$ is orthogonal to all row vectors of $U_{123}$. Lastly, one can obtain $h_7\neq 1$. Otherwise, we have $\ket{0,1,0,i_6'}\in\cH_A\ox\cH_B\ox\cH_C\ox\cH_D$ is orthogonal to all row vectors of $U_{123}$. Then we have proved
\begin{equation}
\label{eq:condition1}
f_6,f_6'\neq 0,1,\ g_3,g_4,g_5\neq 0,1\ h_4,h_5,h_7\neq 1. 
\end{equation}

We claim that $U_{123}$ in \eqref{eq:2+2of3+2+2} has two cases $U_{1231}$ and $U_{1232}$ in \eqref{eq:3next2+2of3+2+2}. Since row 3 is orthogonal to row 6, 7, we can obtain $\ket{1,i_3}$ is orthogonal to $\ket{h_6,i_6}$ and $\ket{h_7,i_7}$ from $f_6\neq 1,g_3\neq 0$ by \eqref{eq:condition1}. One can show that $h_6,h_7$ are not equal to $0$ at the same time. Otherwise, column 3 of $U_{123}$ shows a contradiction with Lemma \ref{le:size8>=3} (i) and the fact $U_{123}$ is a UOM by the assumption $U_1$ in \eqref{eq:3+2+2} is a UOM. Then we have $h_6=0,h_7\neq 0$ or $h_6\neq 0,h_7=0$ or $h_6,h_7\neq 0$. Since $f_6=f_7,g_6=g_7$ and $i_6,i_7$ is undetermined, one can obtain the first two cases $h_6=0,h_7\neq 0 $ and $h_6\neq 0,h_7=0$ are equivalent. Up to equivalence, we have two cases $h_6=0,h_7\neq 0$ or $h_6,h_7\neq 0$ for $U_{123}$. 
For $h_6=0,h_7\neq 0$ in $U_{123}$, since $\ket{1,i_3}$ is orthogonal to $\ket{h_7,i_7}$ from the second line in this paragraph, we obtain $i_7=i_3'$. Since $h_7\neq 1$ by \eqref{eq:condition1} and row 5 is orthogonal to row 6, we have $i_6=i_7'=i_3$. Since $f_6\neq 0,g_4,g_5\neq 0, h_4,h_5\neq 1$ by \eqref{eq:condition1} and row 6 is orthogonal to row 4, 5, we have $i_4=i_5=i_6'=i_3'$. Then $U_{123}$ becomes $U_{1231}$ in \eqref{eq:3next2+2of3+2+2}. 
For $h_6,h_7\neq 0$ in $U_{123}$, since $\ket{1,i_3}$ is orthogonal to $\ket{h_6,i_6}$ and $\ket{h_7,i_7}$ from the second line in this paragraph, we can obtain $i_6=i_7=i_3'$. Then $U_{123}$ becomes $U_{1232}$ in \eqref{eq:3next2+2of3+2+2}. We have proved the claim at the beginning of this paragraph.
\begin{widetext}
\begin{eqnarray}
\label{eq:3next2+2of3+2+2}
U_{1231}=
\bma
0 & 0 & 0 & 0\\
0 & 0 & 0 & 1\\
0 & g_3 & 1 & i_3\\
1 & g_4 & h_4 & i_3'\\
1 & g_5 & h_5 & i_3'\\
f_6 & 1 & 0 & i_3\\
f_6 & 1 & h_7\neq 0 & i_3'\\
f_6' & g_8 & 1 & i_8\\
\ema,
\text{\quad}
U_{1232}=
\bma
0 & 0 & 0 & 0\\
0 & 0 & 0 & 1\\
0 & g_3 & 1 & i_3\\
1 & g_4 & h_4 & i_4\\
1 & g_5 & h_5 & i_5\\
f_6 & 1 & h_6\neq 0 & i_3'\\
f_6 & 1 & h_7\neq 0 & i_3'\\
f_6' & g_8 & 1 & i_8\\
\ema.
\end{eqnarray}
\end{widetext}

For $U_{1231}$ in \eqref{eq:3next2+2of3+2+2}, there exists $\ket{1,0,0,i_3}\in\cH_A\ox\cH_B\ox\cH_C\ox\cH_D$ is orthogonal to all row vectors of $U_{1231}$. It shows a contradiction with the definition of UOM and the fact $U_{1231}$ is a UOM in $\cH_A\ox\cH_B\ox\cH_C\ox\cH_D$ by the assumption $U_1$ in \eqref{eq:3+2+2} is a UOM. That is, $U_{1231}$ is not a UOM in $\cH_{AB}\ox\cH_{CD}$.

For $U_{1232}$ in \eqref{eq:3next2+2of3+2+2}, there exists $\ket{0,1,0,i_3}\in\cH_A\ox\cH_B\ox\cH_C\ox\cH_D$ is orthogonal to all row vectors of $U_{1232}$. It shows a contradiction with the definition of UOM and the fact $U_{1232}$ is a UOM in $\cH_A\ox\cH_B\ox\cH_C\ox\cH_D$ by the assumption $U_1$ in \eqref{eq:3+2+2} is a UOM. That is, $U_{1232}$ is not a UOM in $\cH_{AB}\ox\cH_{CD}$.

Therefore, $U_{123}$ in \eqref{eq:3next2+2of3+2+2} is not a UOM in $\cH_{AB}\ox\cH_{CD}$.

For $U_{124}$ in \eqref{eq:2+2of3+2+2}, we have $f_6,f_6'\neq 0,1$ from the line above \eqref{eq:startclass2}. We obtain $g_3,g_4,g_5\neq 0$ from Lemma \ref{le:44=a3b3}. In the following we show that $h_4,h_5\neq 1$. First, one can obtain $h_4\neq 1$. Otherwise, we have $\ket{1,0,0,i_5'}\in\cH_A\ox\cH_B\ox\cH_C\ox\cH_D$ is orthogonal to all row vectors of $U_{124}$. Second, one can obtain $h_5\neq 1$. Otherwise, we have $\ket{1,0,0,i_4'}\in\cH_A\ox\cH_B\ox\cH_C\ox\cH_D$ is orthogonal to all row vectors of $U_{124}$. Then we have proved 
\begin{equation}
\label{eq:condition2}
f_6,f_6'\neq 0,1,\ g_3,g_4,g_5\neq 0,\ h_4,h_5\neq 1. 
\end{equation}

We claim that $U_{124}$ in \eqref{eq:2+2of3+2+2} has three cases $U_{1241}$, $U_{1242}$ and $U_{1243}$ in \eqref{eq:4next2+2of3+2+2}. Since row 3 is orthogonal to row 6, 8, we can obtain $\ket{1,i_3}$ is orthogonal to $\ket{h_6,i_6}$ and $\ket{h_8,i_8}$ from $f_6,f_6'\neq 1,g_3\neq 0$ by \eqref{eq:condition2}. One can show that $h_6,h_8$ are not equal to $0$ at the same time. Otherwise, column 3 of $U_{124}$ shows a contradiction with Lemma \ref{le:size8>=3} (i) and the fact $U_{124}$ is a UOM by the assumption $U_1$ in \eqref{eq:3+2+2} is a UOM. Then we have three cases $h_6=0,h_8\neq 0$ or $h_6\neq 0,h_8=0$ or $h_6,h_8\neq 0$ for $U_{124}$. 
For $h_6=0,h_8\neq 0$ in $U_{124}$, since $f_6\neq 0$, $g_4,g_5\neq 0$, $h_4,h_5\neq 1$ by \eqref{eq:condition2} and row 6 is orthogonal to row 4, 5, we have $i_4=i_5=i_6'$. Then $U_{124}$ becomes $U_{1241}$ in \eqref{eq:4next2+2of3+2+2}. 
For $h_6\neq 0,h_8=0$ in $U_{124}$, since $f_6'\neq 0$, $g_4,g_5\neq 0$, $h_4,h_5\neq 1$ by \eqref{eq:condition2} and row 8 is orthogonal to row 4, 5, we have $i_4=i_5=i_8'$. Then $U_{124}$ becomes $U_{1242}$ in \eqref{eq:4next2+2of3+2+2}. 
For $h_6,h_8\neq 0$ in $U_{124}$, since $\ket{1,i_3}$ is orthogonal to $\ket{h_6,i_6}$ and $\ket{h_7,i_7}$ from the second line in this paragraph, we can obtain $i_6=i_8=i_3'$. Then $U_{124}$ becomes $U_{1243}$ in \eqref{eq:4next2+2of3+2+2}. We have proved the claim at the beginning of this paragraph.
\begin{widetext}
\begin{eqnarray}
\label{eq:4next2+2of3+2+2}
U_{1241}=
\bma
0 & 0 & 0 & 0\\
0 & 0 & 0 & 1\\
0 & g_3 & 1 & i_3\\
1 & g_4 & h_4 & i_6'\\
1 & g_5 & h_5 & i_6'\\
f_6 & 1 & 0 & i_6\\
f_6 & g_7 & 1 & i_7\\
f_6' & 1 & h_8\neq 0 & i_8\\
\ema,
\text{\quad}
U_{1242}=
\bma
0 & 0 & 0 & 0\\
0 & 0 & 0 & 1\\
0 & g_3 & 1 & i_3\\
1 & g_4 & h_4 & i_8'\\
1 & g_5 & h_5 & i_8'\\
f_6 & 1 & h_6\neq 0 & i_6\\
f_6 & g_7 & 1 & i_7\\
f_6' & 1 & 0 & i_8\\
\ema,
\text{\quad}
U_{1243}=
\bma
0 & 0 & 0 & 0\\
0 & 0 & 0 & 1\\
0 & g_3 & 1 & i_3\\
1 & g_4 & h_4 & i_4\\
1 & g_5 & h_5 & i_5\\
f_6 & 1 & h_6\neq 0 & i_3'\\
f_6 & g_7 & 1 & i_7\\
f_6' & 1 & h_8\neq 0 & i_3'\\
\ema.
\end{eqnarray}
\end{widetext}

For $U_{1241}$ in \eqref{eq:4next2+2of3+2+2}, there exists $\ket{1,0,0,i_6}\in\cH_A\ox\cH_B\ox\cH_C\ox\cH_D$ is orthogonal to all row vectors of $U_{1241}$. It shows a contradiction with the definition of UOM and the fact $U_{1241}$ is a UOM in $\cH_A\ox\cH_B\ox\cH_C\ox\cH_D$ by the assumption $U_1$ in \eqref{eq:3+2+2} is a UOM. That is, $U_{1241}$ is not a UOM in $\cH_{AB}\ox\cH_{CD}$.

For $U_{1242}$ in \eqref{eq:4next2+2of3+2+2}, there exists $\ket{1,0,0,i_8}\in\cH_A\ox\cH_B\ox\cH_C\ox\cH_D$ is orthogonal to all row vectors of $U_{1242}$. It shows a contradiction with the definition of UOM and the fact $U_{1242}$ is a UOM in $\cH_A\ox\cH_B\ox\cH_C\ox\cH_D$ by the assumption $U_1$ in \eqref{eq:3+2+2} is a UOM. That is, $U_{1242}$ is not a UOM in $\cH_{AB}\ox\cH_{CD}$.

For $U_{1243}$ in \eqref{eq:4next2+2of3+2+2}, there exists $\ket{0,1,0,i_3}\in\cH_A\ox\cH_B\ox\cH_C\ox\cH_D$ is orthogonal to all row vectors of $U_{1243}$. It shows a contradiction with the definition of UOM and the fact $U_{1243}$ is a UOM in $\cH_A\ox\cH_B\ox\cH_C\ox\cH_D$ by the assumption $U_1$ in \eqref{eq:3+2+2} is a UOM. That is, $U_{1243}$ is not a UOM in $\cH_{AB}\ox\cH_{CD}$.

So $U_{124}$ in \eqref{eq:3next2+2of3+2+2} is not a UOM in $\cH_{AB}\ox\cH_{CD}$.

We haved proved the claim below \eqref{eq:2+2of3+2+2}.

Therefore, we have proved that $\cT_{AB:CD}$ is not a UPB of size 8 in $\cH_{AB}\ox\cH_{CD}$.

(iii)  We prove the assertion by contradiction. Suppose $\cT_{AB:CD}$ is a UPB of size $8$ in $\cH_{AB}\ox\cH_{CD}$.
Up to the equivalence, we can assume $j_1=1$, $j_2=2$ and $j_3=3$ and $f_1=f_2=g_1=g_2=h_1=h_2=h_3=0$. We express the UOM of $\cT_{A:B:C:D}$ as
\begin{eqnarray}
U_{1}=
\bma
0 & 0 & 0 & 0\\
0 & 0 & 0 & 1\\
f_3 & g_3 & 0 & i_3\\
f_4 & g_4 & h_4 & i_4\\
f_5 & g_5 & h_5 & i_5\\
f_6 & g_6 & h_6 & i_6\\
f_7 & g_7 & h_7 & i_7\\
f_8 & g_8 & h_8 & i_8\\
\ema. 
\end{eqnarray}
We claim that there are only two cases $U_{11}$ and $U_{12}$, where $c_1$, $c_2$, $c_3$, $c_4$, $c_5$ are not $0$'s or $1$'s.
\begin{eqnarray}
\label{eq:startclass3}
U_{11}=
\bma
0 & 0 & 0 & 0\\
0 & 0 & 0 & 1\\
f_3 & g_3 & 0 & i_3\\
f_4 & g_4 & 1 & i_4\\
f_5 & g_5 & c_1 & i_5\\
f_6 & g_6 & c_2 & i_6\\
f_7 & g_7 & c_3 & i_7\\
f_8 & g_8 & c_4 & i_8\\
\ema, 
\text{\quad}
U_{12}=
\bma
0 & 0 & 0 & 0\\
0 & 0 & 0 & 1\\
f_3 & g_3 & 0 & i_3\\
f_4 & g_4 & 1 & i_4\\
f_5 & g_5 & 1 & i_5\\
f_6 & g_6 & c_5 & i_6\\
f_7 & g_7 & c_5 & i_7\\
f_8 & g_8 & c_5' & i_8\\
\ema.
\end{eqnarray}
In fact, if one of  $c_1$, $c_2$, $c_3$, $c_4$, $c_5$ is $0$ or $1$ then $\cT_{AB:CD}$ is not a UPB from Lemma \ref{le:size8>=3} (i) or $U_{11}=U_{12}$. If two of  $c_1$, $c_2$, $c_3$, $c_4$ are $1$'s, we can assume $c_1=c_2=1$. Then the space spanned by $\ket{h_4,i_4}$, $\ket{h_5,i_5}$, $\ket{h_6,i_6}$, $\ket{h_7,i_7}$ has dimension at most three. Also the space spanned by $\ket{f_1,g_1}$, $\ket{f_2,g_2}$, $\ket{f_3,g_3}$, $\ket{f_8,g_8}$ has dimension at most three. So there exist $\phi$, $\psi\in\bbC^4$ such that $\phi$ and $\psi$ is respectively orthogonal to $\ket{f_1,g_1}$, $\ket{f_2,g_2}$, $\ket{f_3,g_3}$, $\ket{f_8,g_8}$ and $\ket{h_4,i_4}$, $\ket{h_5,i_5}$, $\ket{h_6,i_6}$, $\ket{h_7,i_7}$. It makes a contradiction with the assumption that $\cT_{AB:CD}$ is a UPB. Therefore, the form $U_{11}$ and $U_{12}$ are all possible cases. We have proved the claim above \eqref{eq:startclass3}. 

For $U_{11}$, since the first two rows are orthogonal to row $3,5,6,7,8$, we obtain that $\ket{0,0}\in\cH_{AB}$ is orthogonal to $\ket{f_j,g_j}$ for  $j=3,5,6,7,8$. Namely, three of $f_3$, $f_5$, $f_6$, $f_7$, $f_8$ or $g_3$, $g_5$, $g_6$, $g_7$, $g_8$ are $1$'s. It's a contradiction with Lemma \ref{le:size8>=3} (iii) and the fact that $\cT_{AB:CD}$ is a UPB of size $8$.

For $U_{12}$, we have $f_3, g_3, f_4, g_4, f_5, g_5, f_6, g_6, f_7, g_7, f_8, g_8\neq 0$ from Lemma \ref{le:3f3h}(ii). And the first rows are orthogonal to each of the last six rows of $U_{12}$. So $\ket{0,0}\in\cH_{AB}$ is orthogonal to each of $\ket{f_3,g_3}$, $\ket{f_6,g_6}$, $\ket{f_7,g_7}$, $\ket{f_8,g_8}$. Also the first two rows both have at most two identical product vectors from Lemma \ref{le:size8>=3} (iii). So we have the two cases $U_{121}$ and $U_{122}$.
\begin{eqnarray}
U_{121}=
\bma
0 & 0 & 0 & 0\\
0 & 0 & 0 & 1\\
1 & g_3 & 0 & i_3\\
f_4 & g_4 & 1 & i_4\\
f_5 & g_5 & 1 & i_5\\
1 & g_6 & a_5 & i_6\\
f_7 & 1 & a_5 & i_7\\
f_8 & 1 & a_5' & i_8\\
\ema, 
\text{\quad}
U_{122}=
\bma
0 & 0 & 0 & 0\\
0 & 0 & 0 & 1\\
1 & g_3 & 0 & i_3\\
f_4 & g_4 & 1 & i_4\\
f_5 & g_5 & 1 & i_5\\
f_6 & 1 & a_5 & i_6\\
f_7 & 1 & a_5 & i_7\\
1 & g_8 & a_5' & i_8\\
\ema.
\end{eqnarray}
For $U_{121}$, row 3 is orthogonal to row 7 and 8, so $\ket{i_3}$ is orthogonal to $\ket{i_7}$ and $\ket{i_8}$. That is, $i_7=i_8=i_3'$. Using the similar argument, we can obtain $i_6=i_7=i_3'$ in $U_{122}$ because row 3 is orthogonal to row 6 and 7. So $U_{121}$ and $U_{122}$ are respectively equivalent to $U_{1211}$ and $U_{1221}$.
\begin{eqnarray}
U_{1211}=
\bma
0 & 0 & 0 & 0\\
0 & 0 & 0 & 1\\
1 & g_3 & 0 & i_3\\
f_4 & g_4 & 1 & i_4\\
f_5 & g_5 & 1 & i_5\\
1 & g_6 & a_5 & i_6\\
f_7 & 1 & a_5 & i_3'\\
f_8 & 1 & a_5' & i_3'\\
\ema, 
\text{\quad}
U_{1221}=
\bma
0 & 0 & 0 & 0\\
0 & 0 & 0 & 1\\
1 & g_3 & 0 & i_3\\
f_4 & g_4 & 1 & i_4\\
f_5 & g_5 & 1 & i_5\\
f_6 & 1 & a_5 & i_3'\\
f_7 & 1 & a_5 & i_3'\\
1 & g_8 & a_5' & i_8\\
\ema.
\end{eqnarray}
 So column $3$ and $4$ of $U_{1211}$ and $U_{1221}$ shows a contradiction with Lemma \ref{le:size8>=3} (iii) and the fact that $\cT_{AB:CD}$ is a UPB of size $8$. 
\end{proof}

The following proposition can be proven similarly to Lemma \ref{le:size8>=3}, \ref{le:independent}, \ref{le:44=a3b3}, \ref{le:3f3h}.
\begin{proposition}
\label{pp:0of2*3}
Let $\cT_{A:B:C:D}$=\{$\ket{f_1,g_1,h_1,i_1}$, $\ket{f_2,g_2,h_2,i_2}$,...,$\ket{f_8,g_8,h_8,i_8}$\} be a 4-qubit UPB of size $8$ in $\cH_A\ox\cH_B\ox\cH_C\ox\cH_D$. 

(i) If there are two subscripts $j_1,j_2$ such that $\ket{f_{j_1}}=\ket{f_{j_2}}$, $\ket{g_{j_1}}=\ket{g_{j_2}}$ and $\ket{h_{j_1}}=\ket{h_{j_2}}$, then $\cT_{A:B:C:D}$ is no longer a UPB in $\cH_{AB}\ox\cH_{CD}$.

(ii)  If there are two subscripts $j_1,j_2$ such that $\ket{f_{j_1}}=\ket{f_{j_2}}$, $\ket{g_{j_1}}=\ket{g_{j_2}}$, $\ket{h_{j_1}}=\ket{h_{j_2}'}$ and $\ket{i_{j_1}}=\ket{i_{j_2}'}$, then $\cT_{A:B:C:D}$ is no longer a UPB in $\cH_{AB}\ox\cH_{CD}$.

(iii) If there are two subscripts $j_1,j_2$ such that $\ket{f_{j_1}}=\ket{f_{j_2}}$, $\ket{g_{j_1}}=\ket{g_{j_2}'}$, $\ket{h_{j_1}}=\ket{h_{j_2}}$ and $\ket{i_{j_1}}=\ket{i_{j_2}'}$, then $F_1$ in Appendix \ref{app:uom,f1-f6} happen to be all UOMs satisfying this condition.

(iv) If there are two subscripts $j_1,j_2$ such that $\ket{f_{j_1}}=\ket{f_{j_2}}$ and $\ket{g_{j_1}}=\ket{g_{j_2}}$, then $\cT_{A:B:C:D}$ is no longer UPB in $\cH_{AB}\ox\cH_{CD}$. 

(v)  If there are two subscripts $j_1,j_2$ such that $\ket{f_{j_1}}=\ket{f_{j_2}}$ and $\ket{h_{j_1}}=\ket{h_{j_2}}$,  then $F_2,F_3,F_4,F_5$ in Appendix \ref{app:uom,f1-f6} happen to be all UOMs satisfying this condition.

(vi)  If there are five distinct subscripts $j_1,j_2,j_3,j_4,j_5$ such that $\ket{f_{j_1}}=\ket{f_{j_2}}=\ket{f_{j_3}}$ and $\ket{h_{j_3}}=\ket{h_{j_4}}=\ket{h_{j_5}}$, then $F_2,F_3,F_4,F_5$ in Appendix \ref{app:uom,f1-f6} happen to be all UOMs satisfying this condition.

(vii)  If there are three distinct subscripts $j_1,j_2,j_3$ such that $\ket{f_{j_1}}=\ket{f_{j_2}}=\ket{f_{j_3}}$, then $F_2,F_3,F_4,F_5$  in Appendix \ref{app:uom,f1-f6} happen to be all UOMs satisfying this condition.

(viii)  If there are three distinct subscripts $j_1,j_2$ such that $\ket{f_{j_1}}=\ket{f_{j_2}}$, then $F_2,F_3,F_4,F_5,F_6$ in Appendix \ref{app:uom,f1-f6} happen to be all UOMs satisfying this condition.

\end{proposition}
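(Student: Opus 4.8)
The plan is to establish all eight assertions by the same exhaustive case analysis that underlies Lemmas~\ref{le:size8>=3}, \ref{le:independent}, \ref{le:44=a3b3} and~\ref{le:3f3h}, applied to the $(AB){:}(CD)$ coarse graining $\cT_{AB:CD}=\{\ket{p_j,q_j}\}$ of the $4$-qubit UPB $\cT_{A:B:C:D}$, where $\ket{p_j}=\ket{f_j,g_j}\in\cH_{AB}\cong\bbC^4$ and $\ket{q_j}=\ket{h_j,i_j}\in\cH_{CD}\cong\bbC^4$. Using only the part of Lemma~\ref{le:equiv} that respects this grouping --- permuting columns $1,2$, permuting columns $3,4$, the simultaneous permutation $(1\,3)(2\,4)$, and local unitaries on each qubit --- I normalise the subscripts appearing in each hypothesis to $1,2$ (or $1,2,3$, or $1,\dots,5$) and the distinguished qubit vectors to $\ket0$ or $\ket1$. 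The two combinatorial engines are: (a) Lemma~\ref{le:xx'}, that every qubit column has an even number of distinct vectors and each vector in a column has an orthogonal partner in that column; and (b) the o-number inequality of Lemma~\ref{le:UOM} together with the partition bound of Lemma~\ref{le:maxsum}, which, once the multiplicities in two of the four columns are fixed, bound the multiplicities in the other two.

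For the impossibility parts (i), (ii) and (iv), note first that the hypotheses of (i) and of (ii) each strengthen that of (iv), so it suffices to prove (iv). There I normalise the coinciding pair to rows $1,2=\ket{0,0,h_1,i_1},\ket{0,0,h_2,i_2}$; as these rows are orthogonal yet agree on $\cH_{AB}$, their $\cH_{CD}$-parts are orthogonal, whence $\ket{h_1}\perp\ket{h_2}$ or $\ket{i_1}\perp\ket{i_2}$, and I split on which holds. Inside each branch I enumerate, across the four columns, the admissible placements of the entries $\ket0$ and $\ket1$ among rows $3,\dots,8$, discarding every placement excluded by Lemma~\ref{le:size8>=3}, \ref{le:44=a3b3} or~\ref{le:3f3h}; each surviving placement is then killed by writing down a single product vector $\ket p\otimes\ket q\in\cH_{AB}\otimes\cH_{CD}$ orthogonal to all eight rows, exactly as in the proof of Lemma~\ref{le:size8>=3}: the forced coincidences confine the $\cH_{AB}$-parts of one group of rows, and the $\cH_{CD}$-parts of the complementary group, to subspaces of dimension at most three, so $\ket p$ may be taken orthogonal to the first group and $\ket q$ orthogonal to the second.

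For the classification parts (iii), (v), (vi), (vii) and (viii) the same enumeration is run, but some branches now survive. For each survivor I read off its four columns and check that, up to the column operations above, it is one of the matrices $F_1,\dots,F_6$ of Appendix~\ref{app:uom,f1-f6} (or one of their specialisations), the precise sub-list being as stated in each part; conversely each listed matrix is checked to be a UOM in $\bbC^4\otimes\bbC^4$ via Lemma~\ref{le:fj=property}. These parts are to be applied as a cascade, from the most restrictive hypothesis to the least: the most restrictive is (iii) --- two rows agreeing in columns $1$ and $3$ while orthogonal in columns $2$ and $4$ --- which forces $F_1$ in its specialisation $F_1(i_3=i_4')$; the least restrictive is (viii) --- two rows merely agreeing in column $1$ --- which is satisfied by $F_2,\dots,F_6$; parts (v), (vi), (vii) lie in between. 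Accordingly, in each part the phrase ``all UOMs satisfying this condition'' is to be read as all such UOMs not already produced by an earlier, more restrictive case.

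I expect the main obstacle to be the sheer number of subcases --- which is exactly why this appendix is marked as one to be shortened greatly. Pruning the dead branches and verifying that each exhibited $\ket p\otimes\ket q$ is genuinely orthogonal to every row of its branch are long but mechanical, and could be delegated to a computer search over the at most two admissible values of each relevant entry. The genuinely delicate point is guaranteeing that the cascade is exhaustive, i.e.\ that every $4$-qubit size-$8$ UPB whose $(AB){:}(CD)$ coarse graining is a UOM contains a pair of rows agreeing in some qubit, so that (after normalisation) one of the hypotheses (iii)--(viii) applies; this follows from Lemma~\ref{le:xx'}(ii) once one excludes the degenerate possibility that every qubit column carries eight pairwise distinct vectors.
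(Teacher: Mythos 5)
Your plan coincides with the paper's: the authors give no argument for this proposition beyond the single sentence that it can be proven similarly to Lemmas~\ref{le:size8>=3}, \ref{le:independent}, \ref{le:44=a3b3} and \ref{le:3f3h}, and the enumeration you describe --- normalise the coinciding entries using the operations of Lemma~\ref{le:equiv} that respect the $AB{:}CD$ cut, prune branches with those lemmas, and kill each dead branch by exhibiting an orthogonal $\ket{p}\ox\ket{q}$ --- is exactly the method those lemmas instantiate. Two of your refinements are genuinely useful and absent from the paper: the observation that parts (i) and (ii) are formal consequences of part (iv), and the explicit ``cascade'' reading of parts (iii)--(viii), without which (v) and (viii) are literally false (rows $3$ and $4$ of $F_1$ agree in columns $1$ and $3$, yet $F_1$ is absent from the lists in (v) and (viii)).

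One step of your exhaustiveness argument is misattributed. Lemma~\ref{le:xx'}(ii) only says that each column carries an even number of distinct vectors; eight distinct vectors is even, so that lemma does not exclude the degenerate case you need to exclude. What does exclude it is the o-number bound: if every column of the $8\times4$ UOM had eight pairwise distinct entries, then each column's o-number would be at most $4$ (at most four orthogonal pairs, each of multiplicity $1\times1$, consistent with Lemma~\ref{le:maxsum}), giving $\sum_j p_j\le 16<28=8\cdot7/2$ and contradicting Lemma~\ref{le:UOM}. With that one-line substitution the cascade is exhaustive. The remaining, and by far the largest, burden --- actually running the enumeration for (iii)--(viii) and matching the surviving branches to $F_1,\dots,F_6$ and their specialisations --- is no more carried out in your proposal than it is in the paper, so on that point you and the authors are even.
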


\bibliographystyle{unsrt}

\bibliography{kaipaper2}

\end{document}